\newcommand{\indic}[1]{{\mathbbm{1}}_{#1}}
\newcommand{\ifff}[0]{iff}
\newcommand{\ie}[0]{i.e.}
\newcommand{\as}[0]{a.s.}
\newcommand{\eg}[0]{e.g.}
\newcommand{\vanish}[1]{}
\newcommand{\tit}[1]{\textit{#1}}
\newcommand{\tbf}[1]{\textbf{#1}}
\newcommand{\defformat}[1]{\textnormal{({#1})}}
\newcommand{\APP}[0]{\defformat{APP}}
\newcommand{\NFLVR}[0]{\defformat{NFLVR}}
\newcommand{\NUPBR}[0]{\defformat{NUPBR}}
\newcommand{\NA}[0]{\defformat{NA}}
\newcommand{\ELMM}[0]{\defformat{ELMM}}
\newcommand{\MP}[0]{\epsilon_{\textnormal{MP}}}
\newcommand{\wt}[0]{\widetilde}
\newcommand{\mc}[0]{\mathcal}
\newcommand{\msf}[0]{\mathsf}
\newcommand{\ra}[0]{ \rightarrow }
\newcommand{\lra}[0]{ \longrightarrow }
\newcommand{\Lloc}[0]{ L^{\infty}_{\loc}((0,\infty)) }
\DeclareMathOperator{\loc}{loc}
\DeclareMathOperator{\Opt}{Opt}
\DeclareMathOperator{\Esp}{E}
\DeclareMathOperator{\Prob}{P}
\DeclareMathOperator{\Qrob}{Q}
\DeclareMathOperator{\IR}{\mathbb{R}}
\DeclareMathOperator{\IN}{\mathbb{N}}
\DeclareMathOperator{\bF}{\mathcal{F}}
\DeclareMathOperator{\bE}{\mathcal{E}}
\DeclareMathOperator{\dom}{dom}
\DeclareMathOperator{\Var}{Var}
\DeclareMathOperator{\Lop}{L}
\DeclareMathOperator*{\sgn}{sgn}
\DeclareMathOperator*{\MC}{MC}
\newcommand{\rd}{\mathrm{d}}
\newcommand{\vd}{\,\mathrm{d}}
\newcommand{\process}[1]{(#1)_{t\ge 0}}
\newcommand{\loct}[3]{L^{#2}_{#3}(#1)}
\newcommand{\xnorm}[2]{ \norm{#1}_{#2} }
\newcommand{\eqnumB}{\refstepcounter{equation}\textup{\tagform@{\theequation}}}
\newcommand{\braces}[1]{ ( #1 ) } 
\newcommand{\bigbraces}[1]{ \big( #1 \big) } 
\newcommand{\Bigbraces}[1]{ \Big( #1 \Big) } 
\newcommand{\biggbraces}[1]{ \bigg( #1 \bigg) } 
\newcommand{\sqbraces}[1]{ [#1 ] } 
\newcommand{\bigsqbraces}[1]{ \big[ #1 \big] }
\newcommand{\bigcubraces}[1]{ \big\{ #1 \big\}}
\newcommand{\norm}[1]{ \| #1  \|}
\newcommand{\qv}[1]{ \langle #1 \rangle }
\theoremstyle{plain}
\newtheorem{theorem}{Theorem}[section]
\newtheorem{corollary}[theorem]{Corollary}
\newtheorem{lemma}[theorem]{Lemma}
\newtheorem{proposition}[theorem]{Proposition}
\newtheorem*{lemmaOhne}{Lemma}
\theoremstyle{definition}
\newtheorem{definition}[theorem]{Definition}
\newtheorem{example}[theorem]{Example}
\newtheorem{remark}[theorem]{Remark}
\newtheorem{SA}[theorem]{Standing Assumption}
\newcommand{\email}[1]{\href{mailto:{#1}}{{#1}}}
\title{Pricing and hedging for a sticky diffusion}
\author[1]{Alexis Anagnostakis\thanks{\email{alexis.anagnostakis@univ-grenoble-alpes.fr}}}
\affil[1]{Université Grenoble-Alpes, CNRS, LJK, F-38000, Grenoble, France}
\date{}
\begin{document}
	
	\maketitle
	
	\vspace*{-1cm}
	
	\begin{abstract}
We introduce a financial market model featuring a risky asset whose price follows a sticky geometric Brownian motion and a riskless asset that grows with a constant interest rate $r\in \mathbb R $. 
We prove that this model satisfies No Arbitrage (NA) and No Free Lunch with Vanishing Risk (NFLVR) only when $r=0 $. 
Under this condition, we derive the corresponding arbitrage-free pricing equation, assess replicability and representation of the replication strategy. 
We then show that all locally bounded replicable payoffs for the standard Black--Scholes model are also replicable for the sticky model. 
Last, we evaluate via numerical experiments the impact of hedging in discrete time and of misrepresenting price stickiness.
	\end{abstract}

	\renewcommand{\thefootnote}{}

	\footnotetext{ 
		\textit{Date:} \date{\today} \\
		\textit{Keywords and phrases:} 
		sticky geometric Brownian motion, no-arbitrage condition, derivatives pricing, arbitrage-free valuation, hedging time-granularity, model mismatch\\
		\textit{Mathematics Subject Classification 2020:} 
		Primary 91G20, 91G30; 
		Secondary 60J60
	}
	
	\renewcommand{\thefootnote}{\arabic{footnote}}


	\section{Introduction} 
	\label{sec_intro}
	
	The seminal work of Black and Scholes \cite{BlaSch73} stands as the founding result in the theory of derivatives valuation and forms the basis of most pricing models in mathematical finance. 
	The model assumes that the price of the risky asset is a geometric Brownian motion  and that the interest rate is constant.
	Under these conditions, it proposes a riskless replication (or hedging) strategy of the payoff of a European contingent claim through a self-financing portfolio composed of the underlying risky asset and the non--risky asset. 
	In accordance with the arbitrage--free pricing principle, the claim's price is equal to the cost of the replication strategy. 
	This allows effective market risk exposure management and fair pricing of  financial products.
	
	Most continuous models in mathematical finance extend the Black-Scholes framework to capture additional features of stock and derivatives price dynamics. The assumptions in the Black-Scholes model, such as constant volatility, drift, and interest rate, are unrealistic when applied to both historical and implied price behavior. Several models have been developed to address these limitations. Local volatility models~\cite{Dupire1997,dupire1994pricing} and stochastic volatility models~\cite{hagan2002managing,Heston1993,hull1987pricing,scott1987option,stein1991stock} account for the stochastic nature of volatility and provide a better fit for the  implied volatility structure. Models incorporating jumps (\eg{}~\cite{bates1996jumps}) and stochastic interest rates (\eg{}~\cite{bakshi1997empirical}) offer further refinements to market dynamics. In addition, skew-threshold dynamics and reflecting-threshold dynamics have been explored (see~\cite{corns2007skew,decamps2004applications,Rossello2012,buckner2024arbitrage} and references therein). Rough volatility models have been considered to better fit statistical observations~\cite{bayer2016pricing,gatheral2018volatility}. 
	Models addressing liquidity and transaction costs have been proposed~\cite{almgren2001optimal,leland1985option}, lifting additional assumptions of the Black-Scholes model.
	
	An effect that is observed on price series is the presence of thresholds where the dynamic seems to linger or even spend a positive amount of time there. 
	This is observed in the case of corporate takeovers, where the takeover price is considered to be fair by the market.
	This is mentioned in the Introduction of~\cite{Bass} as motivation for his work on the description of stochastic differential equations with a sticky point.
	An example of this occurrence is given in~\cite{criens2022separating} on
	Hansen Technologies Limited between June 7, 2021 and September 6, 2021, where a takeover offer was made and then canceled.
	Softer psychological barriers are also observed in other markets, especially in commodities (see for example gold and silver). 
	
	In this paper, we consider a variant of the Black-Scholes model where the price-dynamic $S$ of the risky asset is a sticky geometric Brownian motion.  
	The process $S$ is a diffusion that behaves like a geometric Brownian motion away for certain price threshold $\zeta>0 $, where it spends a positive amount time upon contact.
	We refer to this model as the \tit{sticky Black-Scholes model}. 
	We establish that there is \tit{No Arbitrage} \NA{} and \tit{No Free Lunch with Vanishing Risk} \NFLVR{} in the model if and only if the interest rate $r$ is zero.
	In this case, we derive the arbitrage--free pricing equation, assess the existence of a replication strategy and provide an explicit representation of it in terms of the solution of the pricing problem. 
	We then prove additional properties on the prices on claims and replicability.
	In particular, that the prices of claims with convex payoffs are decreasing and the class of locally bounded replicable payoffs is increasing with respect to the stickiness parameter.
	This yields that, for this model, one can consider all locally bounded payoff functions that can be replicated in the standard Black-Scholes model. 
	Last, we conduct numerical evaluations of discrete-time hedging and analyze the hedging error incurred when ignoring or misrepresenting price stickiness (model mismatch) as realized volatility or local volatility in a smooth model.

	The proof of sufficient and necessary condition for \NFLVR{} is done by constructing an arbitrage strategy in the case $r\not = 0 $ and by finding an \ELMM{} in the case $r=0 $. 
	The construction of the arbitrage strategy is based on the proof of \cite[Theorem~12.3.5]{Delbaen2006} and relies on the presence of a local time term in the discounted price process that cannot be offset by the
	 quadratic variation of the martingale part. 
	This technique has been previously used to demonstrate the existence of arbitrage in~\cite{buckner2024arbitrage}, for a diffusion with a reflecting price threshold, and in~\cite{Rossello2012}, for a diffusion with skew returns.
	We note that, unlike these models, in a sticky price model, the local time term appears only for the discounted price process
	when $r\not = 0 $. 
	
	In our analysis, we show that the sticky Black--Scholes model is an example of a model where there is no uniqueness of \tit{Equivalent Local Martingale Measure} \ELMM{} for the discounted price process and no uniqueness of payoff replication strategy.
	These peculiar effects arise from the ability to trade in the infinitesimal range of  $\indic{S_t = \zeta} \vd t$, the times when the price process is located at the sticky threshold. Due to the stickiness, this set has positive Lebesgue measure. The arbitrage strategies in the case $r \neq 0$ are necessarily partially or fully supported on the set $\indic{S_t = \zeta} \vd t$. In the case $r = 0$, the \ELMM{} and the payoff replication strategy are equivalence classes of processes that coincide over the range of $\indic{S_t \not= \zeta} \vd t$.

	First discovered by Feller~\cite{Fel52}, sticky diffusions are Markov processes with continuous sample paths that spend  positive amount of time at some threshold(s).
	They are characterized by the presence of atom(s) in their speed measures~\cite{Bass,BorSal} and can be expressed as time-delays of non-sticky diffusions (see~\cite{Ami,EngPes,Sal2017}). 
	The time-delay is proportional to the local time of the diffusion at the sticky threshold.  
	This leads in some cases to path-space representations similar to classical SDEs for absolutely continuous diffusions~\cite{EngPes,Sal2017} 
	(see also~\cite[Section~4]{Anagnostakis2022}).
	
	In finance, sticky diffusions have been studied to model other phenomena like   near zero short rate dynamics~\cite{Nie2,Zhang2022} and winning streak dynamics~\cite{Feng2020}.
	For other applications, see~\cite{BouRabee2020,CalFar,DavTru,Stell1991,ZhuJoh}.

	The paper is organized as follows. 
	In Section~\ref{sec_model}, we define the sticky Black-Scholes model. 
	In Section~\ref{sec_arbitrage}, we show that the sticky Black-Scholes model satisfies  \NFLVR{} \ifff{} $r=0$.
	Under this condition, we identify an equivalent local martingale measure for the price process. 
	In Section~\ref{sec_pricing} we first derive the arbitrage--free pricing equation
	for the case $r=0 $ and prove the existence of the associated payoff replication strategy.
	Section~\ref{sec_monotonicity} explores monotonicity properties in both prices and the classes of replicable payoffs.
	In Section~\ref{sec_numerical}, we numerically evaluate properties of discrete-time hedging and assess the impact of model mismatch.
	Finally, Section~\ref{sec_conclusion} provides concluding remarks.

	\paragraph{Acknowledgments.}
	 The author expresses sincere gratitude to David Criens for his valuable feedback and suggestions on the initial preprint. Special thanks go to Sara Mazzonetto and Denis Villemonais for their insightful discussions and helpful references during the early stages of this work. The author also appreciates the constructive comments from the Editor-in-Chief and the two anonymous referees.

	\section{The model}
	\label{sec_model}

	The standard Black-Scholes model makes the following assumptions:
	\begin{enumerate}
		[label= (\roman*), font=, labelsep=.5em]
		\itemsep-0.3em 
		\item infinite liquidity, divisibility of assets, and borrowing/lending capacity,
		\item frictionless market (no transaction costs, no borrowing or lending rate),
		\item possibility of continuous-time trading,
		\item constant interest rate $r$,
		\item \label{item_BD_geometric} geometric Brownian motion price dynamics.
	\end{enumerate}
	
	In the sticky Black-Scholes model, assumption~\ref{item_BD_geometric} is modified such that the price process of the risky asset $S$ follows a sticky geometric Brownian motion. 
	In particular, the prices of risky and non-risky assets $(S,S^{0})$ solve the following system: 
	\begin{equation}
		\label{eq_def_BS_diffusion}
		\begin{dcases}
			\vd S_t=   \indic{S_t \ne \zeta} S_t \left(\mu \vd t + \sigma \vd B_t\right), \\
			\indic{S_t = \zeta} \vd t = \rho \vd \loct{S}{\zeta}{t}, \\
			\vd S^{0}_t= r S^{0}_t \vd t,
		\end{dcases}
	\end{equation}
	where $\mu, r \in \mathbb{R}$, $\sigma, \zeta > 0$, $\rho > 0 $, where $B$ is a standard Brownian motion, and where $(\loct{S}{y}{t};\,t\ge 0,y>0) $ is the local time field of $S$.

	\begin{figure}[h!]
		\begin{center}
			\includegraphics[alt={Sample path of sticky and non-sticky geometric Brownian motions},width = 0.7\textwidth]{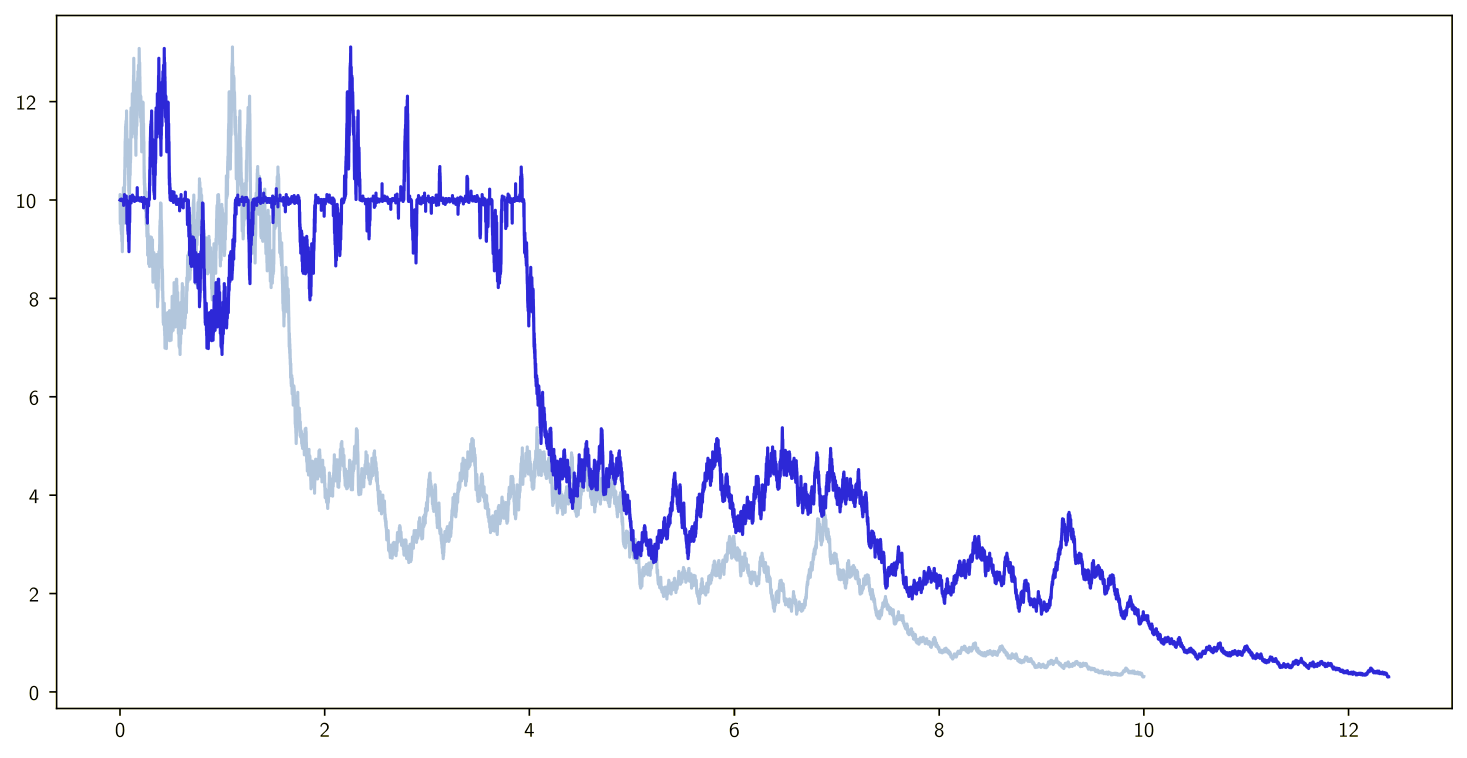}
			\caption{{\centering \small
					Simulated sample paths of the  sticky (dark blue) geometric Brownian motion of parameters $(\mu=0,\sigma=0.25,\rho=1,\zeta=10)$
					and the non-sticky (light blue) geometric Brownian motion of parameters $(\mu=0,\sigma=0.25) $.
					The approximation method used for the simulation is from~\cite{AnaLejVil}.
				}
			}
			\label{fig:sample_path}
		\end{center}
	\end{figure}

	The process $S$ defined this way has the following properties:
	\begin{enumerate}
		
		\item It is a semimartingale with the explicit Doob-Meyer decomposition:
		\begin{equation}
			S_t = \mu \int_{0}^{t}  S_s  \indic{S_s \ne \zeta} \vd s 
			+ \sigma \int_{0}^{t} S_s \indic{S_s \ne \zeta} \vd B_s.
		\end{equation}

		\item \label{item_o_diffusion} It is the regular diffusion on $(0,\infty)$ defined through scale and speed $(s,m) $ (see~\cite[Section~4]{Anagnostakis2022}), where $s,m$ are defined for all $x>0 $ by
		\begin{align}\label{eq_txt_sm_sf}
			s'(x)&= e^{-\frac{2\mu}{\sigma^{2}}\int^{x} \frac{1}{\zeta} \vd \zeta},
			&
			m(\rd x)&= \frac{1}{s'(x)} \frac{1}{\sigma^{2}x^{2}}  \vd x +  \frac{\rho}{s'(\zeta)} \delta_{\zeta}(\rd x).
		\end{align}
		We note that for equation~(4.2) in~\cite{Anagnostakis2022} to be correct, 
		the singular part of the expression should be divided by $s'(0) $.
		The author thanks David Criens and his student, Sebastian Hahn, for identifying and communicating this correction. 
		
		\item It spends a positive amount of time at $\zeta $ upon contact. 
		Since $S$ is a regular diffusion on $(0,\infty) $, this follows from the change of scale~\cite[Proposition~VII.3.5]{RevYor},
		~\cite[Exercise~VI.1.14]{RevYor} and the Markov property.
		This can be observed in the simulated path of the sticky geometric Brownian motion shown in Figure~\ref{fig:sample_path}.

		\item It has an alternative path-wise representation akin to that of the geometric Brownian motion. Indeed, from the Itô formula of~\cite[Lemma 4.5]{Anagnostakis2022}, the process $(X, X^{0}) = (\log S, \log S^{0})$ solves:
		\begin{equation}\label{eq_def_logBS_diffusion}
			\begin{dcases}
				\vd X_t  = \left(\mu - \frac{\sigma^{2}}{2}\right) \indic{X_t \ne \log \zeta} \vd t + \sigma \indic{X_t \ne \log \zeta} \vd B_t, \\
				\indic{X_t = \log \zeta} \vd t = \frac{\rho}{\zeta} \vd \loct{X}{\log \zeta}{t},\\
				\vd X_{t}^{0} = r \vd t.
			\end{dcases}
		\end{equation}
		Consequently, we derive the following expressions for the price dynamics:
		\begin{equation}\label{eq_def_price_dynamics}
			\begin{dcases}
				S_t = S_0\exp \left(\left(\mu - \frac{\sigma^{2}}{2}\right) \int_{0}^{t}\indic{S_t \ne \zeta} \vd t 
				+ \sigma \int_{0}^{t} \indic{S_t \ne \zeta} \vd B_t\right), \\
				\indic{S_t = \zeta} \vd t = \rho \vd \loct{S}{\zeta}{t},\\
				S^{0}_t = S^{0}_{0} \exp \left(r  t\right).
			\end{dcases}
		\end{equation}
	\end{enumerate}

	\begin{example}
		[standard Black--Scholes model]
		\label{example_BS}
		If $\rho=0 $, then $S$ is a geometric Brownian motion and the model $(S,S^{0}) $ is the standard Black--Scholes model. 
		Indeed, in this case, $(s,m) $ defined in~\eqref{eq_txt_sm_sf} are the scale function and speed measure of the geometric Brownian motion (see~\cite[p.132]{BorSal}).
		This can also be seen from~\eqref{eq_def_BS_diffusion}, where if $\rho=0 $, the process $S$ has a geometric Brownian dynamic away from $\zeta $ and the time it  spends time at $\zeta$ is of Lebesgue measure $0$.
	\end{example}
	
	\begin{remark}
		The process $(S, S^{0})$ is a $2$-dimensional sticky diffusion that, when $r \neq 0$, cannot be reduced to a time-change. The stickiness affecting the dynamic of the risky asset's price $S$ slows down its motion on $\zeta $, while the log-value of the zero bond $\log(S^{0})$ steadily increases with rate $r$. 
		This characteristic results in 
		the violation the no-arbitrage hypothesis for the model when $r\not = 0 $.
	\end{remark}

	\section{(No) Arbitrage}
	\label{sec_arbitrage}
	
	In this section we study the model in terms of arbitrage.
	The notion of arbitrage of interest is the free lunch with vanishing risk and the absence of it is called no free lunch with vanishing risk or \NFLVR. 
	The reason we focus on \NFLVR{} is that it is the key notion related to the existence of an equivalent local martingale measure (see~\cite[Theorem~12.1.2]{Delbaen2006}) and the induced arbitrage--free pricing framework.

	First, we recall all the necessary notions related to arbitrage along with
	some useful results.
	Then, we adapt these for the sticky Black-Scholes model.
	In particular, we prove that the model satisfies \NFLVR{} \ifff{}
	$r=0 $.
	In this case, we find an equivalent local martingale measure for the discounted price process $\wt S = S/S^{0} $. 
	
	\subsection{Preliminary notions}
	
	We start by introducing several concepts from the arbitrage theory for semimartingales. 
	Let $X$ be a semimartingale defined on the probability space
	$(\Omega,\process{\bF_t},\Prob)$, where the filtration $\process{\bF_t} $ is assumed both right-continuous and complete.
	This means that
	\begin{enumerate}
		\item for all $t\ge 0 $, $\mc F_t = \bigcap_{s>t} \bF_s $,
		\item $\bF_0 $ contains all $\Prob $-negligible events.
	\end{enumerate}
	
	\begin{SA}
		\label{sa_filtration}
		These assumptions on $\process{\bF_t} $ will apply to all filtrations considered throughout this paper.
	\end{SA}
	
	We may now define the following objects.
	\begin{enumerate}
		\item A \tit{trading strategy} is a predictable process $\theta $ with respect to $\process{\bF_t} $, i.e., for all stopping times $\tau $, $\theta_{\tau} $ is $\bF_{\tau-}$-measurable.
		
		\item An \tit{elementary trading strategy} is a trading strategy of the form:
		\begin{equation}
			\theta_t = \sum_{i=0}^{n} \theta^{(i)} \indic{t\in (\tau_{i},\tau_{i+1}]}
		\end{equation}
		with $n\in \IN $, $(\tau_i)_{i\le n} $ a family of stopping times, and $(\theta^{(i)})_i $ a family of random variables such that, for all $i $, $\theta^{(i)} $ is $\bF_{\tau_i} $-measurable.
		Such strategies are also called simple trading strategies.
		
		\item An $M$-\tit{admissible trading strategy}, for some $M\ge 0 $, is an adapted predictable process $\theta $ with finite variation such that for all $t\ge0 $,  $\int_{0}^{t} \theta_s \vd X_s >-M $ almost surely.
		An \tit{admissible strategy} is a strategy that is $M$-\tit{admissible} for some $M\ge 0 $.
		We denote the set of all $M$-admissible strategies with $\mathcal A_M $.
		
		\item A probability measure measure $\Qrob $ is called an \textit{Equivalent Local Martingale Measure} (or \ELMM{}) for $X$ if $\Qrob \sim \Prob $ and $X$ is a local  martingale on
		$(\Omega,\process{\bF_t},\Qrob) $.
	\end{enumerate}
	
	We now introduce three notions of arbitrage.
	For more on these, see~\cite{Delbaen1994,KaraKard07}.
	We note that arbitrage notions in a diffusion market with non-trivial interest rates should be expressed in terms of the discounted price process.
	Otherwise, for the case $r\in \IR $, depending on the sign of $r $, longing or shorting the riskless asset results in an arbitrage strategy. 
	
	\begin{definition}[(No) Arbitrage]
		A trading strategy $\theta $ is an \textit{Arbitrage Opportunity} for $X$ on $[0, T]$ if $\theta\in \mc A_0$ and $\Prob(\int_{0}^{T} \theta_s \vd X_s > 0) > 0$. 
		A market is \textit{arbitrage-free} (or satisfies \NA) if, for all $\theta \in \mc A_0 $, the probability $\Prob( \int_{0}^{T} \theta_s \vd X_s > 0) = 0$.	
	\end{definition}
	
	\begin{definition}
		[(No) Free Lunch with Vanishing Risk]
		A sequence $(\theta^{n})_n $ of trading strategies is said to realize a \tit{Free Lunch with Vanishing Risk} if for each $n $, $\theta_n $ is $\epsilon_n$-admissible and there exists some 
		$M>0 $ such that,
		\begin{equation}
			\begin{aligned}
				\lim_{n \ra \infty} \epsilon_n &= 0
				& \text{and}&
				& \forall n\in \IN:\quad & \Prob \biggbraces{ \int_{0}^{t} \theta^{n}_s \vd X_s >M} > 0.
			\end{aligned}
		\end{equation}
		A market satisfies No Free Lunch with Vanishing Risk (or satisfies \NFLVR)  
		if there is no sequence of strategies that is free lunch with vanishing risk.
	\end{definition}
	
\begin{definition}
	[(No) Unbounded Profit with Bounded Risk]
	A sequence $(\theta^{n})_n $ is said to realize an \tit{Unbounded Profit with Bounded Risk} if their payoff is uniformly bounded in $(n,t) $ from below but unbounded in probability from above, \ie{} there exists some $M>0 $ such that $(\theta^{n})_n
	\subset \mathcal A_M $ and the payoff sequence is unbounded in probability, \ie  
	\begin{equation}
		\begin{aligned}
		\forall &M'>0, 
		&\exists n&\in \IN, 
		&\exists t&>0:
		& \Prob \biggbraces{\int_{0}^{t} \theta^{n}_s \vd X_s > M'}&>0. 
		\end{aligned}
	\end{equation}
	A market satisfies No Unbounded Profit with Bounded Risk (or satisfies \NUPBR)  
	if there is no sequence of strategies that is an unbounded profit with bounded risk.
\end{definition}

	The reason why we are interested in \NFLVR{} is the following result which qualifies it as a sufficient and necessary condition for the existence of an \ELMM. 
	The latter implies an arbitrage--free pricing framework.
	
	\begin{theorem}[Theorem~12.1.2 of~\cite{Delbaen2006}]
		\label{thm_risk_neutral}
		Let $X$ be a locally bounded semimartingale
		defined on the probability space $(\Omega,\process{\bF_t},\Prob) $.
		The following are equivalent:
		\begin{enumerate}
			\item there exists an \ELMM{} for $X$,
			\item $X$ satisfies \NFLVR,
			\item $X$ satisfies \NA{} and \NUPBR.
		\end{enumerate}
	\end{theorem}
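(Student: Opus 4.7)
The statement cited is the Delbaen--Schachermayer fundamental theorem of asset pricing; my plan is to establish the cyclic chain $(1)\Rightarrow(2)\Rightarrow(3)\Rightarrow(1)$, indicating where the technical weight lies.

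For $(1)\Rightarrow(2)$, given an \ELMM{} $\Qrob$, every $M$-admissible stochastic integral $\int_0^{\cdot}\theta_s\vd X_s$ is a local $\Qrob$-martingale bounded below by $-M$ and hence a $\Qrob$-supermartingale by Fatou, so $\Esp_{\Qrob}\bigsqbraces{\int_0^T\theta_s\vd X_s}\le 0$. A purported FLVR sequence $(\theta^n)$ with admissibility levels $\epsilon_n\downarrow 0$ and $\Prob\bigbraces{\int_0^T\theta^n_s\vd X_s > M'}\ge\delta>0$ contradicts this bound after a standard truncation of $\vd\Qrob/\vd\Prob$ and a Fatou argument on $\int_0^T\theta^n_s\vd X_s$, using $\Qrob\sim\Prob$. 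For $(2)\Rightarrow(3)$, \NA{} is immediate since any arbitrage yields an FLVR sequence constant in $n$; \NUPBR{} follows by rescaling any UPBR sequence $(\theta^n)\subset\mc A_M$ with factors $\lambda_n\downarrow 0$ slowly enough that $\lambda_n\theta^n\in\mc A_{\lambda_n M}$ retains payoffs unbounded in probability from above, which is an FLVR sequence.

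For $(3)\Rightarrow(1)$ I would follow \cite[Chapter~9]{Delbaen2006}. Set $K = \cubraces{\int_0^T\theta_s\vd X_s : \theta\in\mc A_1}$ and $C = (K-L^0_+)\cap L^\infty(\Prob)$. The crucial step is to show that $C$ is closed in $L^\infty(\Prob)$ for the weak-$*$ topology under \NFLVR: \NUPBR{} provides boundedness in $L^0$ of the maximal processes of admissible stochastic integrals, enabling a Koml\'os-type extraction of convex combinations that converge almost surely, and \NA{} identifies the limit as lying in $K-L^0_+$. Once $C$ is weak-$*$ closed, the Kreps--Yan separation theorem applied to $C$ and $L^\infty_+\setminus\{0\}$ produces $g\in L^1_+(\Prob)$ with $\Esp_{\Prob}[fg]\le 0$ for every $f\in C$ and $\Esp_{\Prob}[fg]>0$ for every nonzero $f\in L^\infty_+$; the latter forces $g>0$ $\Prob$-a.s., and $\vd\Qrob/\vd\Prob := g/\Esp_{\Prob}[g]$ defines an equivalent probability. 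Finally, for any bounded simple strategy $\theta$ supported on $[0,\tau]$ with $\tau$ a bounded stopping time, both $\pm\int_0^\tau\theta_s\vd X_s$ lie in $C$, so $\Esp_{\Qrob}[\int_0^\tau\theta_s\vd X_s]=0$; localising along stopping times that reduce $X$ to a bounded process promotes this to the local $\Qrob$-martingale property of $X$.

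The main obstacle is the weak-$*$ closure of $C$: the raw cone $K$ need not be closed, and \NUPBR{} must be used very carefully to control the unbounded positive tails of the stochastic integrals along convergent subsequences. This is the technical core of the Delbaen--Schachermayer theorem; once it is in hand, Kreps--Yan and the localisation argument complete the proof essentially mechanically.
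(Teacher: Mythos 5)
The paper does not prove this statement: it imports it verbatim as Theorem~12.1.2 of Delbaen--Schachermayer, so there is no internal proof to compare against. Your outline is a faithful reproduction of the architecture of their argument: the supermartingale bound $\Esp_{\Qrob}[\int_0^T\theta_s\vd X_s]\le 0$ for admissible integrands (which for non-locally-bounded integrands requires the Ansel--Stricker lemma, not just Fatou) gives $(1)\Rightarrow(2)$; the rescaling trick gives $(2)\Rightarrow(3)$; and $(3)\Rightarrow(1)$ runs through the weak-$*$ closedness of $C=(K-L^0_+)\cap L^\infty$ followed by Kreps--Yan and localisation. As you acknowledge, the entire technical weight sits in the closedness of $C$ (the Koml\'os-type extraction controlled by \NUPBR{} and the identification of the limit via \NA{}), and your proposal defers that step to the literature rather than supplying it, so what you have is a correct roadmap rather than a self-contained proof. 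In the context of this paper that is exactly the right level of treatment, since the author likewise treats the result as a black box; the only caveat worth noting is that the paper's stated definition of a free lunch with vanishing risk (merely $\Prob(\int_0^t\theta^n_s\vd X_s>M)>0$ for each $n$) is weaker than the standard one your $(1)\Rightarrow(2)$ argument uses, where the success probability must be bounded away from zero uniformly in $n$; with the paper's literal definition the contradiction you describe does not go through, so one must read the definition in its standard form for the cited theorem to hold as stated.
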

	
	We observe that an arbitrage strategy also realizes a free lunch with vanishing risk.
	This confirms that \NFLVR{} implies \NA. 
	Hence, the following result provides a necessary condition 
	for the existence of an \ELMM.
	
	\begin{theorem}[one-dimensional version of Theorem~12.3.5 of~\cite{Delbaen2006}]\label{thm_NA}
		Let $X$ be a locally bounded semimartingale 
		that satisfies \NA. 
		Then, if $X = M + A$ is the Doob-Meyer decomposition of $X$, then we have that $\vd A = h \vd \qv{M}$, where $h$ is a predictable process.
	\end{theorem}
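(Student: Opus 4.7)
The plan is to argue by contradiction: assuming that the Lebesgue decomposition of $dA$ with respect to $d\qv{M}$ has a non-trivial singular part, I will construct a bounded predictable strategy supported on a $d\qv{M}$-null predictable set that realizes a strict arbitrage, contradicting \NA. The intuition is the one already exploited for the sticky Black--Scholes model with $r\ne 0$: on a set where the quadratic variation of the martingale part vanishes, the martingale part cannot compensate any motion of $A$, so directional trading on such a set is riskless and nontrivial.

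First, I consider the predictable signed random measure $dA$ and the predictable non-negative random measure $d\qv{M}$ on $\Omega \times [0,\infty)$. By the Lebesgue decomposition applied on the predictable $\sigma$-algebra (the Radon--Nikodym/Hahn--Jordan steps are carried out measurably in $\omega$ via standard disintegration, this being the main technical input of the argument and the step requiring the most care, cf.\ Jacod--Shiryaev), I write $dA = h \vd \qv{M} + dA^{s}$ with $dA^{s} \perp d\qv{M}$ and $h$ predictable. Assuming $dA^{s} \ne 0$, I decompose $dA^{s} = dA^{s,+} - dA^{s,-}$ and suppose without loss of generality that $dA^{s,+} \not\equiv 0$ (else replace the strategy built below by its negative and exchange the roles of $A^{s,+}$ and $A^{s,-}$). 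Mutual singularity then provides a predictable set $C \subset \Omega\times[0,\infty)$ and a deterministic time $T>0$ such that
\begin{equation}
\int_{0}^{\infty}\indic{C}(s)\vd \qv{M}_{s} = 0 \ \text{a.s.}, \quad \int_{0}^{\infty}\indic{C}(s)\vd A^{s,-}_{s} = 0 \ \text{a.s.}, \quad \Prob\biggbraces{\int_{0}^{T}\indic{C}(s)\vd A^{s,+}_{s} > 0}>0.
\end{equation}

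Second, I define the strategy $\theta_{s} = \indic{C}(s)\indic{(0,T]}(s)$, which is predictable and bounded by $1$. Since $X$ is locally bounded, $\theta$ is $X$-integrable up to a localizing sequence, and the stochastic integral $\int_{0}^{\cdot}\theta \vd X$ is a well-defined semimartingale. The martingale part satisfies
\begin{equation}
\Bigqv{\textstyle\int_{0}^{\cdot}\theta\vd M}_{t} \;=\; \int_{0}^{t}\theta_{s}^{2}\vd\qv{M}_{s} \;=\; \int_{0}^{t}\indic{C}(s)\vd\qv{M}_{s} \;=\; 0,
\end{equation}
so $\int_{0}^{\cdot}\theta\vd M \equiv 0$ a.s. Consequently, for all $t\le T$,
\begin{equation}
\int_{0}^{t}\theta_{s}\vd X_{s} \;=\; \int_{0}^{t}\indic{C}(s)\vd A_{s} \;=\; \int_{0}^{t}\indic{C}(s)\vd A^{s,+}_{s} \;\ge\; 0
\end{equation}
(the absolutely continuous part $h\vd\qv{M}$ contributes nothing on $C$, and $dA^{s,-}$ does not charge $C$). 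Hence $\theta \in \mc A_{0}$ and $\Prob(\int_{0}^{T}\theta_{s}\vd X_{s} > 0) > 0$, which is the sought-after arbitrage opportunity, contradicting \NA{} and completing the proof. The only non-routine ingredient is the measurable construction of $C$, which follows from classical predictable selection arguments once one works on the predictable $\sigma$-algebra rather than the product Borel one.
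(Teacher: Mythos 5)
The paper does not prove this statement itself---it is imported verbatim as Theorem~12.3.5 of Delbaen--Schachermayer---and your argument is precisely the standard proof from that source: Lebesgue-decompose $\rd A$ against $\rd\qv{M}$ on the predictable $\sigma$-algebra and trade the sign of the singular part on a predictable set not charged by $\qv{M}$, so that the martingale part of the gains vanishes identically and the drift yields a riskless profit. This is correct (modulo the predictable-measurability details you flag, which are standard), and it is the same mechanism the paper itself adapts in its proof of Theorem~3.6, where the arbitrage is supported on $\{S_t=\zeta\}$, the set where $\rd\qv{M}$ vanishes but $\rd A$ does not.
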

	
	\subsection{The case of the sticky Black-Scholes model}
	
	We now study the sticky Black--Scholes model in terms of~\NA{} and~\NFLVR.
	We first prove that \NFLVR{} holds when $r =0 $, by finding an \ELMM{} for the discounted price $\wt S = S/S^{0} $.
	We then prove that \NA{}, and hence \NFLVR{}, does not hold when $ r \not=0 $, by establishing an arbitrage strategy.  
	
	We first elaborate on the dynamic of the discounted risky asset.
	From~\eqref{eq_def_BS_diffusion} and the Itô formula~(see \cite[Theorem~IV.3.3]{RevYor}), 
	\begin{equation}
		\label{eq_txt_discounted_dynamic}
		\vd \wt S_t = \vd (e^{-rt}S_t) = e^{-rt} \vd S_t - r e^{-rt} S_t \vd t
		=  \Bigbraces{ \mu \indic{S_t \not = \zeta} -r} \widetilde S_t \vd t
		+ \sigma \widetilde S_t \indic{S_t \not = \zeta} \vd B_t.
	\end{equation} 
	This qualifies $\wt S $ as a semimartingale.
	
	\begin{proposition}\label{prop_riskneutral}
		We consider the sticky Black-Scholes model $(S,S^{0})$, as defined in Section~\ref{sec_model}, defined on the probability space $\mathcal P_x = (\Omega,\process{\bF_t},\Prob_x) $ such that $\Prob_x$--\as, $S_0 = x $.
		If $r= 0 $, then $S= \wt S $ and the following hold.
		\begin{enumerate}
			[label= (\roman*), font=, labelsep=.5em]
			\item \label{item_prop_Girsanov_1} The measure~$\Qrob_x$ defined by 
			$\rd \Qrob_x = \bE(\theta)_T \vd \Prob_x $, where 
			\begin{equation}
				\label{eq_prop_def_riskneutral}
				\begin{aligned}
					\bE(\theta)_T &=\exp \Bigbraces{ \int_{0}^{T}\theta_s \vd B_s
						- \frac{1}{2}\int_{0}^{T} \theta^{2}_s \vd s },
					&\text{and} &
					&\theta_t &= \frac{\mu}{\sigma}, 
					&t &\ge 0,
				\end{aligned}
			\end{equation}	
			is an \ELMM{} for $S$.
			\item \label{item_prop_Girsanov_2} The process $S$ solves
			\begin{equation}\label{eq_prop_risk_neutral_diffusion}
				\begin{dcases}
					\vd S_t   = \sigma S_t \indic{S_t \ne \zeta}
					\vd \wt B_t , \\
					\indic{S_t = \zeta} \vd t = \rho 
					\vd \loct{S}{\zeta}{t}, 
				\end{dcases}
			\end{equation}	
			where $\wt B = \process{B_t - (\mu/\sigma) t }$ is a $\Qrob_x$-standard Brownian motion.
			\item \label{item_prop_Girsanov_3} The process $S$ is a martingale under $\Qrob_x $.
		\end{enumerate}
	\end{proposition}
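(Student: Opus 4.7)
My plan is to follow a Girsanov change-of-measure argument, taking care that the stickiness constraint is preserved under the equivalent measure. First I note that $r=0$ makes $S^{0}$ constant, so $\wt S = S/S^{0}_{0}$ and the three claims for $\wt S$ reduce to the corresponding claims for $S$ itself; in particular it suffices to verify the stronger martingale property (iii) directly for $S$.

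For~\ref{item_prop_Girsanov_1} and~\ref{item_prop_Girsanov_2}, I would begin by checking that $\bE(\theta)$ is a true $\Prob_x$-martingale. Since $\theta \equiv \mu/\sigma$ is deterministic and bounded, Novikov's condition
\begin{equation}
\Esp_x\Bigsqbraces{\exp\Bigbraces{\tfrac{1}{2}\int_{0}^{T}\theta_s^{2}\vd s}} = \exp\Bigbraces{\mu^{2}T/(2\sigma^{2})} < \infty
\end{equation}
is trivially satisfied, so $\bE(\theta)$ is a strictly positive uniformly integrable $\Prob_x$-martingale on $[0,T]$ and $\Qrob_x \sim \Prob_x$ is a well-defined probability measure. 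Girsanov's theorem then yields that $\wt B_t = B_t - (\mu/\sigma)t$ is a $\Qrob_x$-standard Brownian motion, and substituting $\vd B_t = \vd \wt B_t + (\mu/\sigma)\vd t$ into the first line of~\eqref{eq_def_BS_diffusion} absorbs the drift $\mu \indic{S_t \ne \zeta}S_t \vd t$, leaving the driftless equation in~\eqref{eq_prop_risk_neutral_diffusion}. The stickiness relation $\indic{S_t=\zeta}\vd t = \rho \vd \loct{S}{\zeta}{t}$ is a pathwise identity involving the semimartingale local time of $S$, so it is invariant under any change to an equivalent measure and transfers verbatim to $\Qrob_x$.

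For~\ref{item_prop_Girsanov_3}, I would apply the extended Itô formula for sticky diffusions~\cite[Lemma~4.5]{Anagnostakis2022} to $\log S$ under~\eqref{eq_prop_risk_neutral_diffusion} (equivalently, read~\eqref{eq_def_price_dynamics} with $\mu=0$ and $B$ replaced by $\wt B$) to obtain the exponential representation
\begin{equation}
S_t = S_0 \exp\Bigbraces{\sigma \int_{0}^{t} \indic{S_s \ne \zeta}\vd \wt B_s - \tfrac{\sigma^{2}}{2}\int_{0}^{t} \indic{S_s \ne \zeta}\vd s}.
\end{equation}
This identifies $S/S_0$ as the stochastic exponential of $\sigma \int_{0}^{\cdot} \indic{S_s \ne \zeta}\vd \wt B_s$, whose integrand is bounded by $\sigma$; a second application of Novikov's condition upgrades this $\Qrob_x$-local martingale to a true $\Qrob_x$-martingale, proving~\ref{item_prop_Girsanov_3} and, a fortiori,~\ref{item_prop_Girsanov_1}.

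The main point requiring care is the invariance of the stickiness equation and of the indicator structure $\indic{S_t \ne \zeta}$ under the measure change: both rest on the pathwise nature of the semimartingale local time at $\zeta$ together with the fact that $\Qrob_x \sim \Prob_x$ preserves almost-sure path properties of $S$. Once this is granted, the rest is a routine two-step Girsanov plus Novikov calculation.
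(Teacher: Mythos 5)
Your proposal is correct in substance and reaches the same conclusions as the paper, but by a genuinely more self-contained route. Where you run Novikov's condition on the deterministic kernel $\theta\equiv\mu/\sigma$, invoke the classical Girsanov theorem, and then argue separately that the stickiness constraint $\indic{S_t=\zeta}\vd t=\rho\vd\loct{S}{\zeta}{t}$ transfers to $\Qrob_x$ because the semimartingale local time is built pathwise from a stochastic integral (hence invariant under equivalent measure changes), the paper compresses all of~\ref{item_prop_Girsanov_1}--\ref{item_prop_Girsanov_2} into a single citation of a Girsanov lemma tailored to sticky diffusions (\cite[Lemma~4.4]{Anagnostakis2022}); your explicit treatment of the measure-invariance of the local-time identity is exactly the point that lemma encapsulates, so nothing essential is missing. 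For~\ref{item_prop_Girsanov_3} the two arguments also diverge: the paper shows $\Esp^{\Qrob_x}\bigbraces{\qv{S}_t}<\infty$ and cites the $L^2$-criterion of \cite[Ch.~IV, Corollary~1.25]{RevYor}, whereas you identify $S/S_0$ as the stochastic exponential of $\sigma\int_0^\cdot\indic{S_s\ne\zeta}\vd\wt B_s$, whose bracket is bounded by $\sigma^2T$, and apply Novikov a second time. Your route for~\ref{item_prop_Girsanov_3} is arguably cleaner, since it avoids the second-moment computation.

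One sign needs fixing. With the density $\bE(\theta)_T$ and $\theta=\mu/\sigma$ as in~\eqref{eq_prop_def_riskneutral}, Girsanov makes $\wt B=B-(\mu/\sigma)t$ a $\Qrob_x$-Brownian motion, so $\vd B_t=\vd\wt B_t+(\mu/\sigma)\vd t$ and
\begin{equation}
\indic{S_t\ne\zeta}S_t\bigbraces{\mu\vd t+\sigma\vd B_t}
=\indic{S_t\ne\zeta}S_t\bigbraces{2\mu\vd t+\sigma\vd\wt B_t},
\end{equation}
i.e.\ the substitution you describe doubles the drift rather than absorbing it; the drift is removed by taking $\theta=-\mu/\sigma$, so that $\wt B=B+(\mu/\sigma)t$. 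This slip originates in the statement itself rather than in your reasoning, and it does not affect the architecture of your argument: Novikov, Girsanov, the pathwise transfer of the stickiness identity, and the second Novikov step for~\ref{item_prop_Girsanov_3} all go through verbatim once the sign of $\theta$ is corrected.
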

	
	\begin{proof}
		Since $r=0 $, we have that $\wt S = S $. 
		From the Girsanov theorem of~\cite[Lemma~4.4]{Anagnostakis2022}, $S$ solves 
		\eqref{eq_prop_risk_neutral_diffusion} under $\Qrob_x $. This proves~\ref{item_prop_Girsanov_1}
		and~\ref{item_prop_Girsanov_2}.
		
		The process $S$ is a local martingale under $\Qrob_x $ as the stochastic integral of a predictable process with respect to a standard Brownian motion.
		From~\cite[Ch.IV, Corollary 1.25]{RevYor}, to prove that $S$ is a martingale, it suffices to show that $\Esp^{\Qrob_{x}}\braces{\qv{S}_t} < \infty $.
		Indeed, from \eqref{eq_def_price_dynamics}, we have  
		\begin{equation}
			\begin{aligned}
				\Esp^{\Qrob_{x}} \bigbraces{ \qv{S}_t }
				&= \Esp^{\Qrob_{x}} \Bigbraces{ \int_{0}^{t} \sigma^{2} S^{2}_s \indic{S_s \not = \zeta} \vd s }
				\le \sigma^{2}  \int_{0}^{t} \Esp^{\Qrob_{x}} \bigbraces{ S^{2}_s} \vd s 
				\\
				&= \sigma^{2} S^{2}_0 
				\int_{0}^{t}
				\Esp^{\Qrob_{x}} \Bigbraces{\exp \Bigbraces{ - \sigma^{2} \int_{0}^{s}\indic{S_u \ne \zeta'} \vd u 
						+ 2 \sigma \int_{0}^{s} \indic{S_u \ne \zeta'} \vd B_u}
				} \vd s, 
			\end{aligned}
		\end{equation}
		where from~\cite[Corollary~VIII.1.16]{RevYor}, the quantity under the expectancy is a martingale.
		Hence for all $t\ge 0 $, we have that $\Esp^{\Qrob_{x}} \bigbraces{ \qv{S}_t } \le \sigma^{2} S^{2}_0 t $,
		which is finite.
		This proves~\ref{item_prop_Girsanov_3}, which completes the proof. 
	\end{proof}

	\begin{theorem}\label{thm_arbitrage} 
		The sticky Black-Scholes model satisfies \NA{} and \NFLVR{} if and only if $r =0 $.
	\end{theorem}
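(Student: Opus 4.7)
The proof splits into two directions according to whether $r=0$ or $r\neq 0$. For the easy direction, the plan is to combine results already at hand: if $r=0$, Proposition~\ref{prop_riskneutral} exhibits an explicit \ELMM{} for $\widetilde S = S$, so Theorem~\ref{thm_risk_neutral} immediately yields \NFLVR{} (whence \NA{}). The local boundedness hypothesis of Theorem~\ref{thm_risk_neutral} is automatic: $S$ is positive and continuous, hence locally bounded via hitting times of arbitrarily large levels.

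For the converse, I would show that $r\neq 0$ implies failure of \NA, by exhibiting an explicit arbitrage strategy supported on the sticky threshold. The key observation from~\eqref{eq_txt_discounted_dynamic} is that on $\{S_t=\zeta\}$ the Brownian coefficient and the $\mu$-drift of $\widetilde S$ both vanish, leaving only the non-trivial deterministic drift $-r\widetilde S_t\vd t$. Take $\theta_t = -\sgn(r)\indic{S_t=\zeta}$, which is predictable as a Borel function of the continuous (hence predictable) process $S$. The martingale part of $\int_0^T\theta_s\vd\widetilde S_s$ drops out since $\indic{S_s=\zeta}\indic{S_s\neq\zeta}=0$, and using $\indic{S_s=\zeta}\vd s=\rho\vd\loct{S}{\zeta}{s}$ together with $\widetilde S_s = e^{-rs}\zeta$ on the sticky set,
\begin{equation}
\int_0^T \theta_s\vd\widetilde S_s = |r|\int_0^T \widetilde S_s\indic{S_s=\zeta}\vd s = |r|\rho\zeta\int_0^T e^{-rs}\vd\loct{S}{\zeta}{s}\geq 0.
\end{equation}
Hence $\theta\in\mc A_0$, and $\Prob_x\bigbraces{\int_0^T\theta_s\vd\widetilde S_s>0}>0$, because as a regular sticky diffusion on $(0,\infty)$ the process $S$ reaches $\zeta$ with positive probability and dwells there for positive Lebesgue time once it arrives (properties established in Section~\ref{sec_model}). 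This violates \NA, and therefore \NFLVR{} as well.

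As a conceptual cross-check, Theorem~\ref{thm_NA} forbids \NA{} directly: in the Doob-Meyer decomposition $\widetilde S = M+A$, the finite-variation part contains the singular contribution $-r\rho\zeta\int_0^t e^{-rs}\vd\loct{S}{\zeta}{s}$ supported on the sticky set, whereas $\vd\qv{M}_t = \sigma^2\widetilde S_t^2\indic{S_t\neq\zeta}\vd t$ vanishes there, so no predictable $h$ can satisfy $\vd A = h\vd\qv{M}$. The main technical nuisance I anticipate is fitting $\theta_t = \indic{S_t=\zeta}$ into the paper's precise admissibility definition (in particular the finite-variation requirement on $\theta$); should this prove delicate, I would replace $\theta$ by the simple approximations $\theta^n_t = -\sgn(r)\indic{|S_t-\zeta|<1/n}$ and pass to the limit using monotone convergence on the non-negative payoffs.
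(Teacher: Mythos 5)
Your proposal is correct and follows essentially the same route as the paper: the $r=0$ direction is obtained by combining Proposition~\ref{prop_riskneutral} with Theorem~\ref{thm_risk_neutral}, and the $r\neq 0$ direction by an arbitrage supported on $\{S_t=\zeta\}$, where the paper's strategy $H_t=\sgn(-r)\,e^{rt}\indic{S_t=\zeta}$ differs from your $\theta_t=-\sgn(r)\indic{S_t=\zeta}$ only by the harmless factor $e^{rt}$, which turns the payoff into exactly $|r|\rho\zeta\,\loct{S}{\zeta}{t}$ rather than your (equally non-negative) $|r|\rho\zeta\int_0^T e^{-rs}\vd \loct{S}{\zeta}{s}$. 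One caveat: your contingency fallback $\theta^n_t=-\sgn(r)\indic{|S_t-\zeta|<1/n}$ would not work, because off the sticky point these strategies pick up nontrivial martingale and $\mu$-drift contributions, so their payoffs are no longer non-negative and the proposed monotone-convergence passage fails; this repair is unnecessary anyway, since the paper itself uses the indicator strategy directly without addressing the finite-variation clause in its admissibility definition.
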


	\begin{proof}
		We recall that we work under the Standing Assumption~\ref{sa_filtration}. 
		From Theorem~\ref{thm_risk_neutral} and Proposition~\ref{prop_riskneutral} if $r=0 $ both \NA{} and \NFLVR{} are satisfied.
		From Theorem~\ref{thm_risk_neutral}, it thus suffices to prove the existence of an arbitrage strategy when $r\not = 0 $.
		Inspired by the proof of \cite[Theorem~12.3.5]{Delbaen2006}, we will construct an arbitrage strategy.
		
		We consider the trading strategy $H $ defined by 
		\begin{equation}
			\label{eq_proof_arbitrage_strat}
			H_t = \sgn(-r) e^{rt}\indic{S_t = \zeta}, \qquad t>0.
		\end{equation}
		We now show that for all $t\ge 0$,
		$\int_{0}^{t} H_s \vd \wt S_s \ge 0 $, and that there exists some 
		$t>0 $ such that $\Prob( \int_{0}^{t} H_s \vd \wt S_s >0)>0 $.
		From~\eqref{eq_def_BS_diffusion}, \eqref{eq_txt_discounted_dynamic} and~\eqref{eq_proof_arbitrage_strat}, 
		\begin{equation}
			\begin{aligned}
				\int_{0}^{t} H_s \vd \wt S_s 
				&= \int_{0}^{t} H_s \vd \wt S_s
				= \int_{0}^{t} H_s \indic{S_s=\zeta} \vd \wt S_s
				\\ 
				&
				= - r\zeta \int_{0}^{t} H_s e^{-rs} \indic{S_s=\zeta} \vd s
				= |r| \zeta \int_{0}^{t}  \indic{S_s=\zeta}  \vd s  = |r| \rho  \zeta \loct{S}{0}{t}.
			\end{aligned}
		\end{equation}
		We observe that for all $t\ge 0 $, $ \int_{0}^{t} H_s \vd \wt S_s  \ge 0  $, \ie{} $H\in \mc A_0 $.
		From~\cite[Corollary~29.18]{Kal} and~\cite[Theorem~1.1]{BrugRuf16},  
		$\Prob_x \bigbraces{\int_{0}^{t} H_s \vd \wt S_s  >0} = \Prob_x(\loct{S}{0}{t}>0)= \Prob_x \braces{\inf\{s>0: S_s = \zeta\}<t}>0 $.
		These qualify $H$ as an arbitrage strategy.
		This completes the proof.
	\end{proof}
	
			\begin{example}
		[standard Black--Scholes ($\rho=0 $), continuation of Example~\ref{example_BS}]
		\label{example_BS_arbitrage}
		In the standard Black--Scholes model the discounted price process 
		has dynamic
		\begin{equation}
			\vd \wt S_t  
			=  \bigbraces{ \mu  -r} \wt S_t \vd t
			+ \sigma \wt S_t \vd B_t.
		\end{equation}
		An \ELMM{} for $\wt S $ is the measure $\Qrob_x $ defined by 
		\begin{equation}
			\begin{aligned}
				\vd \Qrob_x &= \bE(\theta)_T \vd \Prob_x,
				& \text{where}&
				&
				\bE(\theta)_T &=\exp \Bigbraces{ \int_{0}^{T}\frac{\mu-r}{\sigma} \vd B_s
					- \frac{1}{2}\int_{0}^{T} \Bigbraces{\frac{\mu-r}{\sigma}}^{2} \vd s }.
			\end{aligned}
		\end{equation}	
		Hence, \NFLVR{} is satisfied.
		
		We note also that in the standard Black--Scholes model, if $A+M $ is the Doob--Meyer decomposition of the discounted price--process $\wt S $, we have that $\vd A_t \ll \vd \qv{M}_t $.
		Indeed, 
		\begin{equation}
			\forall t\ge 0, \qquad
			\begin{dcases}
				A_t = \int_{0}^{t} (\mu  - r) \widetilde S_s   \vd s,
				\\
				M_t = 
				\int_{0}^{t} \sigma \widetilde S_s  \vd B_s.
			\end{dcases}
		\end{equation}
	\end{example}

	\begin{remark}
		\label{remark_stickiness_persistence}
		From Proposition~\ref{prop_riskneutral}, we note that the price process under the \ELMM{} $ \Qrob_x$ is a driftless sticky geometric Brownian motion, sticky at $\zeta>0$, of stickiness parameter $\rho $.
		Therefore, under $\Qrob_x $ the process $S$ is also a diffusion on $(0,\infty) $.
		This also implies that stickiness persists under the \ELMM{} $\Qrob_x $. 
	\end{remark}

	\begin{remark}
		\label{rmk_Q_non_uniqueness}
		In the sticky Black--Scholes model ($\rho>0$), there is no uniqueness of \ELMM{} for $S$. 
		Indeed, we observe that taking any $\theta $ in~\eqref{eq_prop_def_riskneutral} such that 
		\begin{equation}
			\begin{aligned}
				\forall t &\ge 0: 
				&\int_{0}^{t}\theta_s \indic{S_s \not=\zeta} \vd s  &= \int_{0}^{t}
				 \frac{\mu}{\sigma} \indic{S_s \not = \zeta} \vd s, 
			\end{aligned}
		\end{equation}
		also results in an equivalent local martingale measure $\Qrob^{\theta} $ for $S$. 
		In this case, the process $S $ solves
		\begin{equation}
			\begin{dcases}
				\vd S_t  = \sigma S_t  \indic{S_t \ne \zeta}
				\vd  B^{\theta}_t , \\
				\indic{S_t = \zeta} \vd t = \rho 
				\vd \loct{S}{\zeta}{t}, 
			\end{dcases}
		\end{equation}	
		where $\wt B^{\theta} = \process{B_t - \int_{0}^{t}\theta_s \vd s } $ is a $\Qrob^{\theta} $-Brownian motion
	\end{remark}

	\begin{remark}
		The arbitrage strategy constructed in the proof
		of Theorem~\ref{thm_arbitrage} is highly erratic: it is supported on
		$\{t\in[0,T]:\; S_t= \zeta\} $. It is therefore not a simple arbitrage (an arbitrage that is an elementary strategy).
		In the forthcoming paper~\cite{Anagnostakis2024weak}, it is proven that there is no simple arbitrage strategy in this model for any $r\in \IR $.
		This leads to two key implications: First, arbitrage strategies are practically unattainable as they require an infinite number of transactions. Second, by Theorem~\ref{thm_risk_neutral}, the presence of such an unattainable arbitrage precludes the existence of an \ELMM{} for $\wt S $ and of an arbitrage--free pricing framework.
	\end{remark}

	\section{Pricing and hedging}
	\label{sec_pricing}
	
	This section is dedicated to the pricing and hedging problem in the sticky Black--Scholes model.
	We first prove the pricing equation and deduce pricing and hedging expressions.
	These require the existence of a unique classical solution to the pricing problem.
	Next, we apply the martingale representation theorem to prove, under an integrability condition, the existence of a replication strategy and the representation of the claim's price as the expectancy of the payoff under the \ELMM.
	In our analysis, we observe another distinctive feature of the sticky model: the non-uniqueness of the replication strategy.
	
	These derivations are only valid for the case $r=0$. When $r\neq0$, the \NFLVR{} condition is not met and, as per Theorem~\ref{thm_risk_neutral}, there is no \ELMM{} for the discounted price process. 
	In this case, this pricing framework does not apply.
	
	We recall the concept of arbitrage pricing principle.
	
\begin{definition} [see~\cite{Kabanov2009}, Section~2.1.9] \label{def_risk_neutral} 
	The \tit{Arbitrage pricing principle} (or \APP) refers to the concept that he price of a contingent claim 
	is determined in such a way that no party involved can achieve an arbitrage. 
\end{definition}
	
	For the remainder of this section, we consider the sticky Black-Scholes model defined in Section~\ref{sec_model} with $r=0$, defined on the probability space $\mc P_x = (\Omega,\process{\bF_t},\Prob_x) $ so that $S_{0}=x$, $\Prob_x$-almost surely. 
		Let $\Qrob_x$ be the \ELMM{} for $S$ derived in Proposition~\ref{prop_riskneutral}.
	
	We also consider the notation of Markovian families used in~\cite{Fre} for the diffusion $S$ under the equivalent local martingale measure $\Qrob_x $.
		Let $(s_{\Qrob},m_{\Qrob}) $ be the scale and speed of $S$ under $\Qrob_x $.
		We note with $(Q^{S,\Qrob}_x,\; x\in J )$ the canonical diffusion, of scale and speed $(s_{\Qrob},m_{\Qrob}) $, on the path-space $C([0,\infty),\IR)$ and $Y $ be the coordinate process. 
		Hence, for all almost surely finite stopping time $\tau $, we have
	\begin{equation}
			\label{eq_Markovian_family_formalism}
			Q^{S,\Qrob}_{X_{\tau}} \bigbraces{ h(Y_t) } = \Esp^{\Qrob_x} \bigbraces{ h(X_{\tau+t}) \big| \bF_\tau }.
	\end{equation}

	\subsection{Pricing equation}
	
	We consider the generic pricing problem 
	\begin{equation}
		\label{eq_def_BlackScholes_equ}
		\begin{dcases}
			v_{t}  = -\frac{\sigma^{2} x^{2}}{2} v_{xx}
			, \qquad \forall (t,x) \in [0,T)\times (0,\infty),\\
			\frac{\sigma^{2} \zeta^{2}}{2} v_{xx}(t,\zeta-)
			=\frac{\sigma^{2} \zeta^{2}}{2} v_{xx}(t,\zeta+)
			= \frac{1}{2 \rho} \bigbraces{v_{x}(t,\zeta+)-v_{x}(t,\zeta-)},
			\qquad \forall t \in [0,T),\\
			v(T,x) = h(x), \qquad \forall x \in (0,\infty),
		\end{dcases}
	\end{equation}
	
	We now define the space of classical solutions of~\eqref{eq_def_BlackScholes_equ}.
	This is justified from the forthcoming Remark~\ref{rmk_Feller}.
	
	For $I$ an interval of $[0,\infty) $ and $L$ an interval of $(0,\infty) $, we consider the spaces $C^{\Lop}(I)$, $C^{1,\Lop}(I\times L) $, defined by 
	\begin{equation}
		\label{eq_txt_CL}
		C^{\Lop}(L) :=
		\left\{
		\begin{aligned}
			f &\in C(L) \cap C^{2}(L): \\
				&\quad \frac{\sigma^{2} \zeta^{2}}{2} f''(\zeta-)
				= \frac{\sigma^{2} \zeta^{2}}{2} f''(\zeta-)
				= \frac{1}{2 \rho} \bigbraces{f'(\zeta+)-f'(\zeta-)}
		\end{aligned} 
		\right\},
	\end{equation} 
	and 
	\begin{equation}
		C^{1,\Lop}(I \times L)
		:=
				\left\{
		\begin{aligned}
				 u &\in C(I \times L):\; 
				 \\
				&\forall x \in L,\, u(\cdot,x) \in C^{1}(I);\quad  
				 \forall t \in I,\, u(t,\cdot) \in C^{\Lop}(L) 
		\end{aligned} 
		\right\},
	\end{equation}
	where all considered derivatives are right--derivatives.
	
	We can now define the notion of classical solution of~\eqref{eq_def_BlackScholes_equ}.
	
	\begin{definition}
		A function $v$ is said to be a classical solution of~\eqref{eq_def_BlackScholes_equ} if $v\in C^{1,\Lop}([0,T) \times (0,\infty))$ and if it solves~\eqref{eq_def_BlackScholes_equ}. 
	\end{definition}
	
	We note that the space $C^{\Lop}((0,\infty)) $ is comprised of functions that are differences of convex functions.
	Indeed, if $f \in C^{\Lop}((0,\infty)) $, the function $\phi $
	defined by 
	\begin{equation}
		\begin{aligned}
			\phi(x) &= f(x) - \frac{\rho (\zeta\sigma)^{2}}{2} f''(\zeta) (x - \zeta)_+,
			& x&\in (0,\infty),
		\end{aligned}  
	\end{equation}
	is $C^{2}((0,\infty)) $.
	It suffices to observe that the regularity of $\phi $ ensures it is difference of convex functions and that $[x \ra (x - \zeta)_+] $ is convex. 
	This justifies the usage of the Itô-Tanaka formula  (see~\cite[Theorem~VI.1.5]{RevYor}) on any $f\in C^{\Lop}((0,\infty)) $.
	
	\begin{theorem}
		\label{thm_pricing}
		Let $h$ be a measurable function such that~\eqref{eq_def_BlackScholes_equ} 
		has a unique classical solution~$v$.
		Then, the following hold.
		\begin{enumerate}
			[label= (\roman*), font=, labelsep=.5em]
			\item \label{item_pricing} The \APP{} price at $t\in [0,T]$ of the contingent claim with payoff $h(S_T) $ is $v(t,S_t) $.
			\item \label{item_delta} A replication strategy for $h(S_T) $ is
			$\delta = (v_x(t,S_t);\; t\in[0,T]) $.
		\end{enumerate}
	\end{theorem}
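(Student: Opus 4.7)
The plan is to apply a generalized Itô formula to $v(t,S_t)$ under the \ELMM{} $\Qrob_x$ from Proposition~\ref{prop_riskneutral}, and to show that the pricing equation~\eqref{eq_def_BlackScholes_equ} together with the stickiness relation $\indic{S_t=\zeta}\vd t = \rho\,\vd \loct{S}{\zeta}{t}$ is precisely what is required to cancel the bounded-variation part of $v(t,S_t)$, leaving a stochastic integral against $\vd S$ with integrand $v_x(\cdot,S_\cdot)$.

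I would start by recording that, since $v(t,\cdot)\in C^{\Lop}((0,\infty))$ is a difference of convex functions (as noted right after the definition of $C^{\Lop}$), the Itô--Tanaka formula~\cite[Theorem~VI.1.5]{RevYor} applies to $x\mapsto v(t,x)$, producing a local time term at $\zeta$ with coefficient $\tfrac12(v_x(t,\zeta+)-v_x(t,\zeta-))$. Combining this with the $C^{1}$-in-time regularity of $v$ and with the $\Qrob_x$-dynamics~\eqref{eq_prop_risk_neutral_diffusion} of $S$, which give $\vd \qv{S}_t = \sigma^{2}S_t^{2}\indic{S_t\ne\zeta}\vd t$, I obtain
\begin{equation*}
\vd v(t,S_t) = v_t(t,S_t)\vd t + v_x(t,S_t)\vd S_t + \tfrac{\sigma^{2}S_t^{2}}{2}v_{xx}(t,S_t)\indic{S_t\ne\zeta}\vd t + \tfrac{1}{2}\bigbraces{v_x(t,\zeta+)-v_x(t,\zeta-)}\vd \loct{S}{\zeta}{t}.
\end{equation*}

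I would then substitute $\vd \loct{S}{\zeta}{t} = \rho^{-1}\indic{S_t=\zeta}\vd t$ and split the finite-variation part according to whether $S_t=\zeta$ or not. On $\{S_t\ne\zeta\}$, the drift $v_t(t,S_t)+\tfrac{\sigma^{2}S_t^{2}}{2}v_{xx}(t,S_t)$ vanishes by the bulk pricing equation in~\eqref{eq_def_BlackScholes_equ}. On $\{S_t=\zeta\}$, the drift reduces to $v_t(t,\zeta)+\tfrac{1}{2\rho}(v_x(t,\zeta+)-v_x(t,\zeta-))$, which by the boundary condition in~\eqref{eq_def_BlackScholes_equ} equals $v_t(t,\zeta)+\tfrac{\sigma^{2}\zeta^{2}}{2}v_{xx}(t,\zeta)$, and this vanishes by the pricing equation evaluated at $x=\zeta$. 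Consequently $v(t,S_t) = v(0,S_0) + \int_0^{t} v_x(s,S_s)\,\vd S_s$, which, since $S$ is a $\Qrob_x$-martingale by Proposition~\ref{prop_riskneutral}, identifies $v(\cdot,S_\cdot)$ as a $\Qrob_x$-local martingale whose integrand against $\vd S$ is $v_x(\cdot,S_\cdot)$.

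The terminal condition $v(T,S_T)=h(S_T)$ then shows that, under $r=0$ (so that the non-risky asset is constant and self-financing reduces to book-keeping of cash), the portfolio with initial wealth $v(0,S_0)$ and risky-asset position $\delta_t = v_x(t,S_t)$ replicates the payoff, proving~\ref{item_delta}. For~\ref{item_pricing}, the \APP{} forces the price at any $t\in[0,T]$ to coincide with the value of this replicating portfolio, namely $v(t,S_t)$. The main technical delicacy I anticipate is the rigorous use of the Itô--Tanaka formula with the \emph{time-dependent} jump amplitude $F(t):=v_x(t,\zeta+)-v_x(t,\zeta-)$; this is handled cleanly by the decomposition $v(t,x) = \phi(t,x) + F(t)(x-\zeta)_+$, so that $\phi(t,\cdot)\in C^{2}((0,\infty))$ by virtue of $v_{xx}$ being continuous at $\zeta$, then applying the standard Itô formula to $\phi(t,S_t)$, the product rule and Itô--Tanaka to $F(t)(S_t-\zeta)_+$, and checking that the $F'(t)(S_t-\zeta)_+\vd t$ term recombines with $\phi_t(t,S_t)\vd t$ into $v_t(t,S_t)\vd t$.
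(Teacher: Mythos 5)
Your proof is correct and follows essentially the same route as the paper's: Itô--Tanaka applied to $v(t,S_t)$, the stickiness relation converting the local time term into $\rho^{-1}\indic{S_t=\zeta}\vd t$, and the transmission condition combined with the bulk equation at $x=\zeta$ cancelling the drift on $\{S_t=\zeta\}$, leaving only $v_x(t,S_t)\vd S_t$. Your closing remark on rigorously justifying the Itô--Tanaka step with the time-dependent jump amplitude via the decomposition $v(t,x)=\phi(t,x)+F(t)(x-\zeta)_+$ addresses a technical point the paper's proof passes over silently, and is a welcome addition.
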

	
		\begin{proof}
		Let $v $ be the unique classical solution of~\eqref{eq_def_BlackScholes_equ}.
		Since for all $t\in [0,T) $, $v(t,\cdot)\in  C^{\Lop}((0,\infty)) $, from the Itô-Tanaka formula, 
		\begin{equation}\label{eq_proof_PDE_Ito}
			\begin{aligned}
				\vd v(t,S_t)	
				=& v_t (t,S_t) \vd t + v_x (t,S_t) \vd S_t\\
				&+ \frac{1}{2}  v_{xx} (t,S_t) \vd \qv{ S}_t
				+ \frac{1}{2} \bigbraces{v_{x} (t,\zeta+)- v_{x} (t,\zeta-)} \vd \loct{S}{\zeta}{t}.
			\end{aligned}
		\end{equation}
		From \eqref{eq_prop_risk_neutral_diffusion},
		\begin{equation}
			\begin{aligned}
				\vd \qv{S}_t &= \sigma^{2}S^{2}_t \indic{S_t \not = \zeta} \vd t,
				&
				\indic{S_t = \zeta} \vd t &= \rho  \vd 
				\loct{S}{\zeta}{t}. 
			\end{aligned}
		\end{equation}
		This, combined with~\eqref{eq_def_BlackScholes_equ}, yields
		\begin{equation}
			\frac{1}{2} \bigbraces{v_{x} (t,\zeta+)- v_{x} (t,\zeta-)}
			\vd \loct{S}{\zeta}{t}
			= \frac{\rho \sigma^{2} \zeta^{2}}{4} v_{xx}(t,\zeta)
			\vd \loct{S}{\zeta}{t}
			= - v_t(t,\zeta)\indic{S_t = \zeta} \vd t .
		\end{equation}
		Consequently, 
		\begin{equation}\label{eq_proof_PDE_Itob}
			\begin{aligned}
				\vd v(t,S_t) = & 
				v_t (t,S_t) \indic{S_t \not = \zeta} \vd t + v_x (t,S_t) \vd S_t 
				+ \frac{\sigma^{2} S^{2}_t}{2}  v_{xx} (t,S_t)  \indic{S_t \not = \zeta} \vd t\\
				&+ v_t (t,S_t) \indic{S_t = \zeta} \vd t
				+ \frac{1}{2} \bigbraces{v_{x} (t,\zeta+)- v_{x} (t,\zeta-)} \vd \loct{S}{\zeta}{t}\\
				=& v_x (t,S_t) \vd S_t + 
				\Bigbraces{v_t (t,S_t) 
					+ \frac{\sigma^{2} S^{2}_t}{2}  v_{xx} (t,S_t) } 
				\indic{S_t \not = \zeta}  \vd t.
			\end{aligned}
		\end{equation}
		Integrating in time yields 
		\begin{equation}
			h(S_T) 
			=
			v(t,S_t) + 
			\int_{t}^{T} v_x (t,S_s) \vd S_s.
		\end{equation}
		This proves the two assertions. 
	\end{proof}
	
	\begin{remark}
		\label{rmk_discontinuity}
		The replication strategy $\delta = (v_x(t,S_t);\; t\in[0,T]) $ in the sticky Black-Scholes model is typically discontinuous at the sticky threshold $\zeta $.
		Indeed, from~\eqref{eq_def_BlackScholes_equ}, 
		if $v_{xx}(t,\zeta)\not=0$, then 
		$v_x(t,\zeta+) - v_x(t,\zeta-) \not = 0 $.
	\end{remark}
	
	For our next remark, we recall some notions. 
	The semi--group of a diffusion is the family of operators $\process{P_t} $
	defined by
	\begin{equation}
		\begin{aligned}
			P_t f(x) &:= \Esp \bigbraces{f(S_t) \big| S_0 = x}, 
			& \forall&\; f \in C_b((0,\infty)),\; t\ge 0,\; x\in (0,\infty).
		\end{aligned}
	\end{equation}
	We note that $S$ is a Feller process (see~\cite[Theorem~1]{criens2024FellerDynkin}), \ie{} it is strong Markov and its semi--group $\process{P_t} $ is internal on $C_0((0,\infty)) $ and continuous at $0$, \ie{} 
	\begin{enumerate}
		\item $\forall f \in C_0((0,\infty)) $: $P_t f \in C_0((0,\infty)) $,
		\item $\forall f \in C_0((0,\infty)),\; x\in (0,\infty) $: $\lim_{t\ra 0} P_t f(x) = f(x) $.
	\end{enumerate} 
	Its semi--group is therefore generated by the following operator 
	\begin{equation}\label{eq_proof_infinitesimal_generator}
		\begin{aligned}
		\Lop f &= \frac{(x\sigma)^{2}}{2} f'',
		& \dom_{C_0}(\Lop) = & C_0((0,\infty)) \cap C^{\Lop}((0,\infty)). 
		\end{aligned}
	\end{equation}

	\begin{remark}
		\label{rmk_Feller}
		We note that if $h \in \dom_{C_0}(\Lop) $, from the Kolmogorov equations (see~\cite[Proposition~VII.1.2]{RevYor}), the problem~\eqref{eq_def_BlackScholes_equ} has a unique classical solution 
		 $v \in C^{1,\Lop}([0,T] \times (0,\infty)) $, that admits the following representation
		\begin{equation}
			\begin{aligned}
				\forall (t,x) &\in [0,T] \times (0,\infty), &v(t,x) &= Q_{x}^{S,\Qrob}(h(Y_{T-t})).
			\end{aligned}
		\end{equation}
	\end{remark}
	
	\subsection{Representation}
	
	In Remark~\ref{rmk_Feller}, we observed that if $h\in \dom_{C_0}(\Lop)$, then
	the pricing equation has a unique classical solution, the price function
	has a representation in terms of an expectancy and there exists a replication strategy.
	In this section we prove that the two latter assertions are part of a more general principle, which is consequence of the martingale representation property~\cite[Theorem~2.7]{criens2024representationproperty1dgeneral}. 
	
	Our motivation is that the condition $h\in \dom_{C_0}(\Lop) $ is quite restrictive. Indeed, classical claims like Calls and Puts have payoff functions that
	are neither in $C_0((0,\infty)) $ nor in $C^{\Lop}((0,\infty)) $.
	This holds true even for the standard Black--Scholes model, where 
	$\dom_{C_0}(\Lop) = C_0((0,\infty)) \cap C^{2}((0,\infty)) $.
	Nevertheless, as depicted in Section~\ref{ssec_discrete_time_hedging}, we obtain satisfactory numerical results for a Call payoff in the sticky and the non-sticky model.

	\begin{theorem}
		\label{thm_prepricing}
		For all $h \in \msf H(S_T) := \{ g\; \text{\normalfont measurable}:\; g(S_T) \in L^{1}(\Omega,\bF_T,\Qrob_x) \}$, 
		\begin{enumerate}
			[label= (\roman*), font=, labelsep=.5em]
			\item \label{item_prereplication} there exists a trading strategy $ \delta $ in $ L^{2}_{\loc}(S)$
			(for a definition of $ L^{2}_{\loc}(S)$, see~\cite[\S I.4.39]{JacodShiryaev2003_limit}), 
			such that for all $t\in [0,T] $, 
			\begin{equation}
				\label{eq_thm_replication}
				\begin{dcases}
				\Esp^{\Qrob_x}(h(S_T) | \bF_t) = \Esp^{\Qrob_x}(h(S_T)) + \int_{0}^{t} \delta_s \vd  S_s,
				\\
				h(S_T) = \Esp^{\Qrob_x}(h(S_T) | \bF_t) + \int_{t}^{T} \delta_s \vd  S_s,  
				\end{dcases}
			\end{equation}
			\item \label{item_prepricing} and for all $t\in [0,T] $, the \APP{} price of the claim of terminal payoff $h(S_T) $ at $t$ is $V_t := \Esp^{\Qrob_x}(h(S_T) | \bF_t) = Q^{S,\Qrob}_{S_t} \bigbraces{ h(Y_{T-t}) }$.
		\end{enumerate}
	\end{theorem}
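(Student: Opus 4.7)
The plan is to prove parts \ref{item_prereplication} and \ref{item_prepricing} separately. The core of the argument for \ref{item_prereplication} is the martingale representation property of $S$ under $\Qrob_x$, while for \ref{item_prepricing} I would combine the arbitrage pricing principle with the strong Markov property of $S$.

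For part \ref{item_prereplication}, since $h(S_T)\in L^{1}(\Omega,\bF_T,\Qrob_x)$, the process $V_t := \Esp^{\Qrob_x}(h(S_T)\mid \bF_t)$ is a uniformly integrable $\Qrob_x$-martingale with $V_T = h(S_T)$, $\Qrob_x$-a.s. Under $\Qrob_x $, the process $S$ solves the driftless sticky SDE~\eqref{eq_prop_risk_neutral_diffusion}, so I would invoke the martingale representation property cited in~\cite[Theorem~2.7]{criens2024representationproperty1dgeneral} to obtain a predictable $\delta \in L^{2}_{\loc}(S)$ such that
\begin{equation}
V_t = V_0 + \int_{0}^{t}\delta_s \vd S_s, \qquad t\in [0,T].
\end{equation}
Noting that $V_0 = \Esp^{\Qrob_x}(h(S_T))$ and $V_T = h(S_T)$, the first identity in~\eqref{eq_thm_replication} follows directly, and the second is obtained by subtraction, $h(S_T) - V_t = \int_{t}^{T}\delta_s \vd S_s$.

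For part \ref{item_prepricing}, since $r=0$, we have $S = \wt S $, and thus the pair $(\delta_t, V_t - \delta_t S_t)$ is a self-financing portfolio in the risky and riskless assets whose terminal value equals $h(S_T)$. By the arbitrage pricing principle (Definition~\ref{def_risk_neutral}), any quoted price distinct from the cost $V_t$ of this replicating portfolio would generate an arbitrage, so the \APP{} price at $t$ must equal $V_t$. The identification $V_t = Q^{S,\Qrob}_{S_t}(h(Y_{T-t}))$ is then a direct consequence of the strong Markov property of $S$ under $\Qrob_x$, which holds because $S$ is Feller (as noted in Remark~\ref{rmk_Feller}), combined with the Markovian family notation introduced in~\eqref{eq_Markovian_family_formalism}.

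The main obstacle is the applicability of the martingale representation property to the sticky diffusion: the presence of the singular atom in the speed measure could a priori obstruct the classical representation. This is handled by appealing to the recent result of~\cite{criens2024representationproperty1dgeneral}, which covers one-dimensional diffusions in this degree of generality. A subtler point is that, by Remark~\ref{rmk_Q_non_uniqueness}, the \ELMM{} is not unique; however this does not affect the argument since the pricing conclusion follows from the existence of a self-financing replicating portfolio (any \APP{}-consistent price must coincide with its cost), rather than from uniqueness of the pricing measure.
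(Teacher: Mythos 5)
Your proposal follows essentially the same route as the paper: part \ref{item_prereplication} via the uniformly integrable martingale $V_t=\Esp^{\Qrob_x}(h(S_T)\mid\bF_t)$ and the representation theorem of~\cite[Theorem~2.7]{criens2024representationproperty1dgeneral}, with the second identity in~\eqref{eq_thm_replication} obtained by subtraction, and part \ref{item_prepricing} via the \APP{} and the Markov property. The one step you gloss over is that the representation property is a statement about local martingales adapted to the canonical filtration of $S$, whereas $V$ is a priori only $\process{\bF_t}$-adapted; the paper closes this by invoking the Markov property to write $\Esp^{\Qrob_x}(h(S_T)\mid\bF_t)=\Esp^{\Qrob_x}(h(S_T)\mid\sigma(S_t))$, so that $V$ is indeed adapted to the filtration generated by $S$ before the theorem is applied. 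With that observation added, your argument matches the paper's proof.
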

	
	\begin{proof}
		Regarding~\ref{item_prereplication}, since $h(S_T) \in L^{1}(\Omega,\bF_T,\Qrob_x) $, the process $M$ defined as follows is a uniformly integrable martingale 
		\begin{equation}
			M_t := \Esp^{\Qrob_x}(h(S_T) | \bF_t), \qquad t\in [0,T].
		\end{equation}
		From the Markov property of $S$ under $\Qrob_x $ (see Remark~\ref{remark_stickiness_persistence}), we have that $\Esp^{\Qrob}(h(S_T) | \bF_t)
		= \Esp^{\Qrob}(h(S_T) | \sigma(S_t)) $. Hence, $M$ is adapted to the canonical filtration of $S$.
		Since $S $ is a local martingale, from~\cite[Theorem~2.7]{criens2024representationproperty1dgeneral},
		there exists a predictable process $\delta \in L^{2}_{\loc}(S) $
		such that 
		\begin{equation}
			M_t = M_0 + \int_{0}^{t} \delta_s \vd  S_s.
		\end{equation}
		This proves the first equation of~\ref{item_prereplication}
		Regarding the second equation, we consider \eqref{eq_thm_replication} for different values of $t $. Substracting them
		results in
		\begin{equation}
			h(S_T) = \Esp^{\Qrob_x}(h(S_T) | \bF_{t_0}) - \int_{t_0}^{T} \delta_s \vd  S_s.
		\end{equation}
		
		Regarding~\ref{item_prepricing}, from the \APP{} (see Definition~\ref{def_risk_neutral}) and the payoff replication relations~\ref{eq_thm_replication}, the price of the claim at $t$ is 
		necessarily $ \Esp^{\Qrob_x}(h(S_T) | \bF_t) $.
		Any other value would result in an arbitrage opportunity for one of the two parties.
		This finishes the proof.
	\end{proof}
	
	\begin{remark}
		We note that in the sticky Black--Scholes model ($\rho>0 $) we have no uniqueness of the payoff replication strategy.
		Indeed, let $\delta $ be a payoff replication strategy
		and $\delta' $ a strategy such that
		\begin{equation}
			\begin{aligned}
				\delta  &\not = \delta' 
				&\text{and}&
				& \forall t&\ge 0: 
				&\delta_t \indic{S_t\not =\zeta}  &= \delta'_t \indic{S_t\not =\zeta}. 
			\end{aligned}
		\end{equation}
		Then, we have  
		\begin{equation}
			\Esp^{\Qrob_x}(h(S_T) | \bF_t) - \Esp^{\Qrob_x}(h(S_T)) 
			=  \int_{0}^{t} \delta_s \vd  S_s
			=  \int_{0}^{t} \delta'_s \vd  S_s.
		\end{equation}
	\end{remark}
	
	\begin{corollary}
		Let $h \in \msf H(S_T)$ such that~\eqref{eq_def_BlackScholes_equ}
		has a unique classical solution $v$.
		Then, from the \APP,
		\begin{equation}
			\begin{aligned}
				\forall (t,x)&\in [0,T] \times (0,\infty) , 
				&v(t,x) &= Q^{S,\Qrob}_{x} \bigbraces{ h(Y_{T-t}) },
			\end{aligned}
		\end{equation}
		and if $\delta $ is a replication strategy for $h(S_T) $, then 
		\begin{equation}
			\begin{aligned}
			\forall t&\in [0,T],
			& v_x(t,S_t) \indic{S_t \not = \delta} &=  \delta_t \indic{S_t \not = \delta},
			\end{aligned}
		\end{equation}
		and $(v_x(t,S_t);\; t\in [0,T] ) \in L^{2}_{\loc}(S) $.
	\end{corollary}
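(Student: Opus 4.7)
The plan is to juxtapose Theorems~\ref{thm_pricing} and~\ref{thm_prepricing}: both produce \APP{} prices and replicating strategies for $h(S_T)$, the former through the PDE~\eqref{eq_def_BlackScholes_equ}, the latter through the martingale representation property. The three claims of the corollary follow by reconciling the two.

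For the representation $v(t,x) = Q^{S,\Qrob}_x(h(Y_{T-t}))$, I equate the \APP{} prices produced by the two theorems: Theorem~\ref{thm_pricing}\ref{item_pricing} gives the price at $t$ as $v(t,S_t)$, while Theorem~\ref{thm_prepricing}\ref{item_prepricing} gives it as $\Esp^{\Qrob_x}(h(S_T)\mid\bF_t) = Q^{S,\Qrob}_{S_t}(h(Y_{T-t}))$, the second equality following from the Markov property of $S$ under $\Qrob_x$ (Remark~\ref{remark_stickiness_persistence}). This yields $v(t,S_t) = Q^{S,\Qrob}_{S_t}(h(Y_{T-t}))$, $\Qrob_x$-a.s. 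Both sides are continuous functions of their spatial argument ($v(t,\cdot)\in C^{\Lop}((0,\infty))$ by hypothesis, and the right-hand side by the Feller property invoked in Remark~\ref{rmk_Feller}), and the regularity of $S$ ensures that the law of $S_t$ under $\Qrob_x$ charges every non-empty open subset of $(0,\infty)$; combined with the freedom to vary the initial condition $x$, this upgrades the a.s.\ identity to the pointwise one on $[0,T]\times(0,\infty)$.

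For the identification of replication strategies, both $(v_x(t,S_t))_{t\in[0,T]}$ (Theorem~\ref{thm_pricing}\ref{item_delta}) and $\delta$ (Theorem~\ref{thm_prepricing}\ref{item_prereplication}) replicate $h(S_T)$. Subtracting the two self-financing portfolios gives $\int_0^t(v_x(s,S_s)-\delta_s)\vd S_s=0$ for every $t\in[0,T]$, $\Qrob_x$-a.s. Since $S$ is a continuous local martingale under $\Qrob_x$ with $\vd\qv{S}_s=\sigma^2 S_s^2\indic{S_s\ne\zeta}\vd s$ (see~\eqref{eq_prop_risk_neutral_diffusion}), taking the quadratic variation of this identically vanishing stochastic integral forces
\begin{equation}
    \bigbraces{v_x(s,S_s)-\delta_s}^2 \sigma^2 S_s^2 \indic{S_s\ne\zeta} = 0, \qquad \vd s\otimes \vd\Qrob_x\text{-a.e.},
\end{equation}
i.e.\ $v_x(s,S_s)\indic{S_s\ne\zeta}=\delta_s\indic{S_s\ne\zeta}$. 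The third assertion then follows at once: because $\vd\qv{S}$ is supported on $\{S_s\ne\zeta\}$ and $\delta\in L^2_{\loc}(S)$, the identity delivers
\begin{equation}
    \int_0^t v_x(s,S_s)^2\vd\qv{S}_s \;=\; \int_0^t\delta_s^2\vd\qv{S}_s \;<\; \infty, \qquad t\in[0,T],\;\Qrob_x\text{-a.s.},
\end{equation}
so $(v_x(t,S_t))_{t\in[0,T]}\in L^2_{\loc}(S)$.

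I expect the main subtle point to be the pointwise upgrade in the first claim, which hinges on the support of the law of $S_t$ under $\Qrob_x$ filling $(0,\infty)$; this is a standard consequence of the strict positivity of the transition density of a one-dimensional regular diffusion, but deserves an explicit invocation. The remaining steps are routine algebraic manipulations of the martingale identities already established in Theorems~\ref{thm_pricing} and~\ref{thm_prepricing}.
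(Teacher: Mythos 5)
Your proposal follows what is essentially the paper's (implicit) route: the corollary carries no separate proof and is meant to be read off by juxtaposing Theorem~\ref{thm_pricing} and Theorem~\ref{thm_prepricing}, exactly as you do. The identification of the strategies via the vanishing stochastic integral and its quadratic variation, and the deduction that $(v_x(t,S_t))_{t\in[0,T]}\in L^{2}_{\loc}(S)$ from $\delta\in L^{2}_{\loc}(S)$ together with the fact that $\vd\qv{S}$ does not charge $\{S_s=\zeta\}$, are both correct and are the intended arguments.

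One step deserves a better justification. For the pointwise identity $v(t,x)=Q^{S,\Qrob}_x(h(Y_{T-t}))$ you upgrade the a.s.\ identity $v(t,S_t)=Q^{S,\Qrob}_{S_t}(h(Y_{T-t}))$ using continuity of both sides in the spatial variable, and you attribute the continuity of $x\mapsto Q^{S,\Qrob}_x(h(Y_{T-t}))$ to the Feller property of Remark~\ref{rmk_Feller}. That property, as stated, only applies to $h\in C_0((0,\infty))$, whereas here $h$ is merely measurable with $h(S_T)\in L^1$; continuity of the resolvent/semigroup on such $h$ is not immediate from what is cited. A cleaner fix that avoids the issue entirely: fix $(t,x)$, note that by time-homogeneity and uniqueness of the classical solution the function $(s,y)\mapsto v(t+s,y)$ is the unique classical solution of~\eqref{eq_def_BlackScholes_equ} with horizon $T-t$, and apply Theorem~\ref{thm_pricing}\ref{item_pricing} and Theorem~\ref{thm_prepricing}\ref{item_prepricing} at time $0$ in the model started at $x$; this gives $v(t,x)=\Esp^{\Qrob_x}(h(S_{T-t}))=Q^{S,\Qrob}_x(h(Y_{T-t}))$ directly for every $x$, with no support or continuity argument needed. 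With that substitution the proof is complete.
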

	
	\begin{example}
	[standard Black-Scholes ($\rho=0 $), continuation of Examples~\ref{example_BS},~\ref{example_BS_arbitrage}]
		\label{example_BS_pricing}
		For the standard Black-Scholes model $(\rho = 0)$,
		the risky asset and discounted risky asset price dynamics $S,\wt S$ under $\Qrob_x $ are 
		\begin{equation}
			\label{eq_BS_SDE_12}
			\begin{aligned}
				\vd S_t &= S_t \bigbraces{ r \vd t + \sigma \vd B_t},
				&
				\vd \wt S_t &=  \sigma \wt S_t \vd B_t. 
			\end{aligned}
		\end{equation} 
		The pricing equation is (see~\cite[Theorems~6.4.3 and~7.3.1]{shreve2004stochasticfinance})
		\begin{equation}
			\label{eq_bS_pricing_PDE}
			\begin{dcases}
				v_t(t,x) + rx v_x(t,x) + \frac{\sigma^{2}x^{2}}{2} v_{xx}(t,x)
				= rv(t,x), & \forall (t,x) \in (0,T)\times (0,\infty),\\
				v(T,x) = h(x),
			\end{dcases}
		\end{equation}
		If $h\in \msf H(S_T) $, there exists a replication strategy in $L^{2} $ and the price at $t\in [0,T]$ of the contingent claim of payoff at maturity 
		$h(S_T) $ is given by 
		\begin{equation}
			V_t := \Esp^{\Qrob_x}\bigbraces{ e^{-r(T-t)} h(S_T) \big| \bF_t }.
		\end{equation} 
		If~\eqref{eq_bS_pricing_PDE} has a unique classical solution $v $, the \APP{} price of the contingent claim at $t\in [0,T] $ is $V_t = v(t,S_t) $, and the replication strategy is the process $\delta $ defined by (see~\cite[Remark~7.3.3]{shreve2004stochasticfinance})
		\begin{equation}
			\begin{aligned}
				\delta_t &:= v_x(t,S_t),
				& t&\in [0,T].
			\end{aligned}
		\end{equation}
	\end{example}

	\section{Price monotonicity and payoff inclusion}
	\label{sec_monotonicity}
	
	In this section we prove price monotonicity, 
	and inclusion properties for a class of replicable payoffs, with respect to the stickiness parameter.
	
	We first prove that higher stickiness leads to lower prices for European derivative with convex payoff functions.
	We deduce the respective inclusion on the subset of convex positive replicable payoffs.
	Second, we generalize the latter and prove that within the class of locally bounded 
	($L^{\infty}_{\loc} $) payoffs, the same inclusion holds.
	The proof relies on fine properties of the transition kernel.
	These inclusions justify the usage of classical payoffs in the sticky Black-Scholes model.

	We first define the following objects.
	For each $\rho\ge 0$, let $S^{(\rho)}$ be the solution of \eqref{eq_def_BS_diffusion} and 
	$\mathsf H(S^{(\rho)}_T) $ be the class of replicable payoff functions on $S^{(\rho)}_T $, given by
	\begin{equation}
		\label{eq_H_rho_replicable_payoffs}
		\mathsf H(S^{(\rho)}_T) := \bigcubraces{h \; \text{measurable}:\; h(S^{(\rho)}_T) \in L^{1}(\Omega,\bF_T,\Qrob_x)}.
	\end{equation}
	
	\begin{proposition}[price monotonicity]\label{cor_price_bounds_stickiness}
		Let $\Opt_h(S^{(\rho)},x,T)$ be the \APP{} price at $t=0 $ of the contingent claim on the underlying asset $S^{(\rho)}$ with payoff function $h$, maturity $T>0$ and spot price $S^{(\rho)}_0 = x $, \ie{} 
		$\Opt_h(S^{(\rho)},x,T) = \Esp^{\Qrob_x} \bigbraces{h(S^{(\rho)}_T)}$.
		If $h$ is convex, then 
		\begin{enumerate}
			[label= (\roman *),font=,labelsep=.5em]
			\item \label{item_monotonicity_stickiness} for all $x,T>0 $, the application $[\rho \rightarrow \Opt_h(S^{(\rho)},x,T)] $ is decreasing,
			\item \label{item_monotonicity_time} for all $x>0 $, $\rho\ge0 $, the application  $[T \rightarrow \Opt_h(S^{(\rho)},x,T)] $ is increasing,
		\end{enumerate}
	\end{proposition}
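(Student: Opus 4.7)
Both parts rest on the martingale property of $S^{(\rho)}$ under the \ELMM{} $\Qrob_x$ proven in Proposition~\ref{prop_riskneutral}\ref{item_prop_Girsanov_3}. Part~\ref{item_monotonicity_time} is an immediate consequence of Jensen's inequality; for part~\ref{item_monotonicity_stickiness} I would couple all $S^{(\rho)}$ on a single probability space through a time change of a common non--sticky geometric Brownian motion, and deduce a convex-order comparison.

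\textbf{Part~\ref{item_monotonicity_time}.} Fix $\rho\ge 0$. Since $S^{(\rho)}$ is a $\Qrob_x$-martingale and $h$ is convex, conditional Jensen implies that $(h(S^{(\rho)}_t))_{0\le t\le T}$ is a $\Qrob_x$-submartingale, once integrability is checked. The hypothesis $\Opt_h(S^{(\rho)},x,T)<\infty$ yields the $L^{1}$ bound at $t=T$; applying Jensen conditionally on $\bF_t$ gives the same bound from above at intermediate times, and the tangent minorant $h(y)\ge a+by$ together with $\Esp^{\Qrob_x}[S^{(\rho)}_t]=x$ controls $h(S^{(\rho)}_t)^{-}$. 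Taking expectations yields monotonicity of $t\mapsto \Esp^{\Qrob_x}[h(S^{(\rho)}_t)]$.

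\textbf{Part~\ref{item_monotonicity_stickiness}.} Let $\wh S$ be a driftless geometric Brownian motion under $\Qrob_x$ with $\wh S_0=x$. Following the time--change representations of sticky diffusions (see \cite{EngPes,Sal2017,Ami,AnaLejVil}) I would set
\begin{equation}
A^{(\rho)}_s := s + \rho \loct{\wh S}{\zeta}{s},
\qquad
\sigma^{(\rho)}_t := \inf \bigcubraces{s\ge 0 : A^{(\rho)}_s > t},
\qquad
S^{(\rho)}_t := \wh S_{\sigma^{(\rho)}_t},
\end{equation}
which gives a pathwise coupling of the whole family $(S^{(\rho)})_{\rho\ge 0}$ on the canonical space of $\wh S$. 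Since $\rho \mapsto A^{(\rho)}_s$ is increasing pathwise, $\rho \mapsto \sigma^{(\rho)}_T$ is decreasing and bounded above by $T$. For $0\le \rho_1\le \rho_2$, the stopping times $\sigma^{(\rho_2)}_T \le \sigma^{(\rho_1)}_T \le T$ of the natural filtration $\mc F^{\wh S}$ are bounded, and $\wh S$ is $L^{2}$-bounded, hence uniformly integrable, on $[0,T]$. Optional sampling then gives
\begin{equation}
\Esp^{\Qrob_x} \bigsqbraces{ S^{(\rho_1)}_T \,\big|\, \mc F^{\wh S}_{\sigma^{(\rho_2)}_T} }
= \wh S_{\sigma^{(\rho_2)}_T} = S^{(\rho_2)}_T,
\end{equation}
so that $S^{(\rho_2)}_T$ is dominated by $S^{(\rho_1)}_T$ in the convex order. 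Conditional Jensen applied to the convex function $h$ and taking expectations yields $\Opt_h(S^{(\rho_2)},x,T) \le \Opt_h(S^{(\rho_1)},x,T)$.

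\textbf{Main obstacle.} The delicate point is to justify the time change representation uniformly in $\rho$, so that a single $\wh S$ drives every $S^{(\rho)}$ and the crucial pathwise inequality $\sigma^{(\rho_2)}_T\le \sigma^{(\rho_1)}_T$ holds on a fixed probability space; once this coupling is established the probabilistic content is standard. A PDE alternative would be to differentiate the pricing function $v^{(\rho)}$ solving~\eqref{eq_def_BlackScholes_equ} formally in $\rho$ and apply a maximum-principle argument that exploits $v_{xx}(t,\zeta)\ge 0$ (which holds for convex terminal data by propagation of convexity through the martingale semigroup), but this requires additional regularity of $v^{(\rho)}$ in $\rho$ and a careful treatment of the transmission condition at $\zeta$.
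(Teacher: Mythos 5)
Your argument is correct and follows essentially the same route as the paper: part~\ref{item_monotonicity_time} via the submartingale property of $h(S^{(\rho)})$ under the martingale measure, and part~\ref{item_monotonicity_stickiness} by realizing every $S^{(\rho)}_T$ as a common non-sticky geometric Brownian motion evaluated at the ordered stopping times $\gamma_\rho(T)\le\gamma_{\rho'}(T)\le T$ and invoking optional sampling together with convexity (the paper optional-stops the submartingale $h(S^{(0)})$ directly, which is the same computation as your convex-order step). The coupling you flag as the main obstacle is exactly what the paper isolates as Lemma~\ref{lem_timechange}, proved in Appendix~\ref{app_lem_timechange} from the Brownian time-change representation of regular diffusions.
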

	
	The proof relies on the following result, which allows a sticky diffusion to be expressed as a time change of a non-sticky diffusion. Although this result can be inferred from the proof of~\cite[Proposition~2.1.17]{Anagnostakis2022Thesis}, we provide a more detailed proof in Appendix~\ref{app_lem_timechange}.
	
	\begin{lemma}
		\label{lem_timechange}
		Let $X$ be a regular diffusion on $J $, an open interval of $\IR $, on natural scale of speed measure $m$ so that 
		$\rho := m(\{\zeta\})\in (0,\infty) $.
		We suppose $X$ to be defined on the probability space $\mc P_x = (\Omega,\process{\bF_t},\Prob_x) $.
		There exists a regular diffusion $Z $ on $ J$, on natural scale, with speed measure $m_Z = m - \rho \delta_{\zeta} $, defined on an extension of $\mc P_x $,  such that 
		\begin{equation}
			\forall t\ge 0:\qquad
			X_{t} = Z_{\gamma_{\rho}(t)},
			\qquad A_{\rho}(t) = t + \rho \loct{Z}{\zeta}{t},
		\end{equation}
		where $\gamma_{\rho} $ is the right-inverse of $A_{\rho} $.
	\end{lemma}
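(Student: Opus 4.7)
My plan is to construct $Z$ as the ``desticked'' version of $X$ obtained by running the clock only when $X \ne \zeta$, rather than building $Z$ from scratch and verifying that the forward time change recovers $X$. Since the speed measure $m$ differs from $m_Z$ only by the atom $\rho\delta_\zeta$, and this atom is exactly responsible for the set $\{t:X_t = \zeta\}$ having positive Lebesgue measure, contracting this set should yield the desired non-sticky diffusion.

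Concretely, using the defining relation $\indic{X_s = \zeta}\vd s = \rho\vd \loct{X}{\zeta}{s}$ built into the sticky diffusion $X$, I would introduce the continuous non-decreasing additive functional
\begin{equation}
	\tau(t) := \int_0^t \indic{X_s \ne \zeta} \vd s = t - \rho \loct{X}{\zeta}{t},
\end{equation}
denote its right-continuous inverse by $\sigma$, and set $Z_u := X_{\sigma(u)}$ on a suitable extension of $\mc P_x$. The first main step is to show that $Z$ is a regular diffusion on $J$, on natural scale, with speed measure $m_Z$. Naturality of scale is inherited from $X$ because composition with a continuous increasing time change preserves the local martingale property. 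The speed measure is identified via an occupation-time computation: for $f$ continuous with compact support in $J$, the change of variable $v = \tau(s)$ gives
\begin{equation}
	\int_0^u f(Z_v) \vd v = \int_0^{\sigma(u)} f(X_s) \indic{X_s \ne \zeta} \vd s,
\end{equation}
and the scale-speed characterization of $X$ identifies the right-hand side as the integral of $f$ against $m$ restricted to $J \setminus \{\zeta\}$, which equals $m_Z$. Regularity transfers from $X$ since removing an atom from the speed measure preserves the communication structure.

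The second step is to identify $A_\rho$ as $\sigma^{-1}$. Using the time-change formula for semimartingale local times, which yields $\loct{Z}{\zeta}{u} = \loct{X}{\zeta}{\sigma(u)}$ for the time-changed semimartingale $Z = X \circ \sigma$, I would compute
\begin{equation}
	A_\rho(\tau(t)) = \tau(t) + \rho \loct{Z}{\zeta}{\tau(t)} = \bigbraces{t - \rho \loct{X}{\zeta}{t}} + \rho \loct{X}{\zeta}{t} = t,
\end{equation}
so $\gamma_\rho = \sigma$ and $Z_{\gamma_\rho(t)} = X_{\sigma(\tau(t))}$, which coincides with $X_t$ on $\{X_t \ne \zeta\}$ and also on $\{X_t = \zeta\}$ since $X$ is constant on any interval where $\tau$ is constant. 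The main obstacle I anticipate is a clean handling of local times under the time change—reconciling conventions between semimartingale and diffusion local times, and checking measurability and right-continuity of the augmented filtration carrying $Z$—after which uniqueness of a regular diffusion in terms of its scale and speed closes the argument.
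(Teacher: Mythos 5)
Your construction is correct, but it runs in the opposite direction from the paper's. The paper invokes the Rogers--Williams representation twice over a \emph{common} Brownian motion $W$: it writes $X=W\circ\gamma_X$ with $A_X(t)=\int_J \loct{W}{y}{t}\,m(\rd y)$, \emph{defines} $Z=W\circ\gamma_Z$ with the atomless measure $m_Z$, and then relates the two clocks via the time-change identity for local times. You instead build $Z$ in one step directly from $X$ by deleting the sticky time, $Z_u=X_{\sigma(u)}$ with $\sigma=\tau^{-1}$ and $\tau(t)=\int_0^t\indic{X_s\ne\zeta}\vd s$. The two constructions yield the same process (one checks $\gamma_X\circ\sigma=\gamma_Z$), but your route trades the ready-made theorem [RW, V.47.1] for a hands-on verification that time-changing a diffusion by the inverse of the additive functional $\tau(t)=\int_J \loct{X}{y}{t}\,m_Z(\rd y)$ produces a regular diffusion on natural scale with speed measure $m_Z$ --- which is essentially the same occupation-time machinery, applied to $X$ rather than to $W$. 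What your version buys is transparency (the ``desticking'' picture is the probabilistic content of the lemma) at the cost of having to justify the speed-measure identification and the local-time invariance yourself; the paper's version outsources both to the quoted theorem and to [RY, Exercise VI.1.27]. Two points to tighten. First, the line ``so $\gamma_\rho=\sigma$'' is a slip: your computation $A_\rho(\tau(t))=t$ shows $A_\rho=\sigma$, hence $\gamma_\rho=\tau$, which is what your final formula $Z_{\gamma_\rho(t)}=X_{\sigma(\tau(t))}$ actually uses. Second, the step $\loct{Z}{\zeta}{\tau(t)}=\loct{X}{\zeta}{t}$ needs $\sigma(\tau(t))=t$, i.e.\ that $\tau$ is \emph{strictly} increasing; this holds because regularity of $X$ forces $\{s:X_s=\zeta\}$ to be closed and nowhere dense (of positive Lebesgue measure but with empty interior), so no interval is contracted --- worth stating explicitly, since your fallback remark about ``intervals where $\tau$ is constant'' suggests such intervals could exist when in fact they cannot. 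With those two clarifications, and the normalization bookkeeping between diffusion and semimartingale local times that you already flag, the argument closes.
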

	
	\begin{proof}
		[Proof of Proposition~\ref{cor_price_bounds_stickiness}]
		Let $S^{(0)}$ be defined on the probability space 
		$(\Omega,\process{\bF_t},\Prob_{x}) $ such that 
		$ S^{(0)}=x$, $\Prob_x$-almost surely. 
		Let $ \Qrob_x$ be the \ELMM{} for $S^{(0)}$ 
		defined in Proposition~\ref{prop_riskneutral}.
		From Theorem~\ref{thm_prepricing}\ref{item_prepricing}, 
		\begin{align}	
			\Opt_h(S^{(0)},x,T) &= \Esp^{\Qrob_x} (h(S^{(0)}_{T})).
		\end{align}
		For all $\rho \ge 0 $, let $A_{\rho} $ be the random time change defined by $A_{\rho}(t)= t + \rho \loct{S^{(0)}}{\zeta}{t} $ and let $\gamma_{\rho} $ be its right-inverse.
		We observe that 
		\begin{enumerate}
			\item if $\rho'<\rho $, then $0 \le \gamma_{\rho}(T)\le \gamma_{\rho'}(T)\le T $,
			\item from Lemma~\ref{lem_timechange}, $\process{S^{(0)}_{\gamma_{\rho}(t)}}=\process{S^{(\rho)}_t} $ in law,
			\item $\gamma_{\rho}^{-1},\gamma_{\rho} $ are stopping times,
			\item since $S^{(0)} $ is a $\Qrob_x $-martingale (Proposition~\ref{prop_riskneutral}\ref{item_prop_Girsanov_3}), $\process{h(S^{(0)}_t)} $ is a $\Qrob_x $-submartingale.
		\end{enumerate}
		Thus, from the stopping theorem, if $\rho'<\rho $,
		\begin{equation}
			\begin{aligned}
				\Opt_h(S^{(\rho')},x,T) 
				&= \Esp^{\Qrob_x} (h(S^{(\rho')}_{T}))
				= \Esp^{\Qrob_x} (h(S^{(0)}_{\gamma_{\rho'}(T)}))
				\\ &\le \Esp^{\Qrob_x} (h(S^{(0)}_{\gamma_{\rho}(T)}))
				= \Esp^{\Qrob}_{x} (h(S^{(\rho)}_{T}))
				= \Opt_h(S^{(\rho)},x,T).
			\end{aligned}
		\end{equation}
		This proves~\ref{item_monotonicity_stickiness}.
		With similar arguments, if $0 \le T\le T' $,
		\begin{equation}
			\Opt_h(S^{(\rho)},x,T) 
			= \Esp^{\Qrob_x} (h(S^{(\rho)}_{T}))
			\le \Esp^{\Qrob_x} (h(S^{(\rho)}_{T'}))
			= \Opt_h(S^{(\rho)},x,T').
		\end{equation}
		This proves~\ref{item_monotonicity_time}, which completes the proof.
	\end{proof}
	
	\begin{remark}
		Lemma~\ref{lem_timechange} yields that if $C^{+} $ is the cone of positive convex
		functions, then 
		\begin{equation}
			\label{eq_txt_rho_convex_inclusion}
			\begin{aligned}
			\msf H(S^{(\rho)}_T) &\cap C^{+}
			\subset \msf H(S^{(\rho')}_T) \cap C^{+},
			& \forall\, &0 \le \rho < \rho'.
			\end{aligned}
		\end{equation}
		In particular, this holds for Call and Put payoff functions.
	\end{remark} 
	
	The next proposition goes beyond this observation.
	
	\begin{proposition}
		\label{cor_replicable_payoff_class}
		Let $ \Lloc $ be the family of locally bounded functions on $(0,\infty) $, \ie{} 
		the functions that are bounded on all compacts of $(0,\infty)$.
		For all $0\le \rho<\rho' $, we have
		\begin{equation}
			\msf H(S^{(\rho)}_T) \cap \Lloc
			\subset \msf H(S^{(\rho')}_T) \cap \Lloc.
		\end{equation}  
	\end{proposition}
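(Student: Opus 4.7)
The plan is to reduce the inclusion to a stochastic domination of the laws of $S^{(\rho)}_T$ and $S^{(\rho')}_T$ on tail sets, and then to establish that domination via the time change of Lemma~\ref{lem_timechange}.

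First, let $h \in \msf H(S^{(\rho)}_T) \cap \Lloc$ and $\rho' > \rho$. Choose $A>0$ large enough that $x,\zeta \in (1/A, A)$ and set $K := [1/A, A]$. By local boundedness, $M_K := \sup_{y \in K}|h(y)|<\infty$, so $\Esp^{\Qrob_x}\bigbraces{|h(S^{(\rho')}_T)|\indic{S^{(\rho')}_T \in K}} \le M_K$. It therefore suffices to control the tail $\Esp^{\Qrob_x}\bigbraces{|h(S^{(\rho')}_T)|\indic{S^{(\rho')}_T \in K^c}}$. This would follow from the stochastic domination
\begin{equation*}
\Qrob_x(S^{(\rho')}_T \in B) \le \Qrob_x(S^{(\rho)}_T \in B),\qquad B \text{ Borel in } (0,\infty)\setminus\{\zeta\}, \tag{$\ast$}
\end{equation*}
by a layer-cake decomposition of $|h|$ restricted to $K^c$ (whose integrand vanishes at $\zeta$ since $\zeta\in K$).

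To establish $(\ast)$, I would rely on Lemma~\ref{lem_timechange}: on a common probability space, $S^{(\rho)}_t = S^{(0)}_{\gamma_\rho(t)}$ and $S^{(\rho')}_t = S^{(0)}_{\gamma_{\rho'}(t)}$, with $\gamma_{\rho'}\le\gamma_\rho$ being the right-inverses of $s\mapsto s+\rho\,\loct{S^{(0)}}{\zeta}{s}$. A direct substitution of the time change yields, for every $B \subset (0,\infty)\setminus\{\zeta\}$ and $\lambda>0$,
\begin{equation*}
\int_0^\infty e^{-\lambda T}\,\Qrob_x(S^{(\rho)}_T \in B)\, dT = \Esp^{\Qrob_x}\Bigsqbraces{\int_0^\infty e^{-\lambda(s + \rho\,\loct{S^{(0)}}{\zeta}{s})}\indic{S^{(0)}_s \in B}\, ds},
\end{equation*}
whose integrand is pointwise non-increasing in $\rho$ since $\loct{S^{(0)}}{\zeta}{s}\ge 0$. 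Hence the Laplace transforms in $T$ of $\Qrob_x(S^{(\rho)}_T\in B)$ are monotonically ordered in $\rho$ for every $\lambda>0$.

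The main obstacle is to upgrade this Laplace-side ordering to the pointwise-in-$T$ inequality $(\ast)$, since non-negativity of the Laplace transform of a continuous function does not on its own force pointwise non-negativity. Two routes appear feasible. The analytic route applies a parabolic maximum principle to the Kolmogorov backward equation: the map $(t,x) \mapsto \Qrob_x(S^{(\rho)}_{T-t} \in B)$ solves~\eqref{eq_def_BlackScholes_equ} with terminal datum $\indic{B}$, and the only $\rho$-dependence sits in the transmission condition at $\zeta$, which becomes more restrictive as $\rho$ grows; comparing the solutions for $\rho$ and $\rho'$ yields $(\ast)$, after a standard approximation of $\indic{B}$ by smooth data. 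The probabilistic route uses Itô excursion theory at $\zeta$ on the common $S^{(0)}$: all additional "time" absorbed by the stickier $\rho'$-process is concentrated at $\zeta$, so the probability of landing in a set disjoint from $\zeta$ at time $T$ can only decrease as $\rho$ increases. Either route closes the argument.
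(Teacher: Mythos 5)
Your front end is sound and coincides with the paper's: split $(0,\infty)$ into a compact $K$ containing $x$ and $\zeta$, use local boundedness of $h$ on $K$, and reduce the tail term to a set-wise comparison of the laws of $S^{(\rho)}_T$ and $S^{(\rho')}_T$ via a layer-cake argument. The genuine gap is that the comparison $(\ast)$ is never proved. What you actually establish is only the ordering of the Laplace transforms $\lambda\mapsto\int_0^\infty e^{-\lambda T}\,\Qrob_x(S^{(\rho)}_T\in B)\,\rd T$, and, as you concede, this does not yield the pointwise-in-$T$ inequality; the two routes you offer to bridge this are sketches, and neither closes. For the maximum-principle route, setting $u_\rho(t,x)=\Qrob_x(S^{(\rho)}_{T-t}\in B)$ and $w=u_\rho-u_{\rho'}$, the transmission condition in~\eqref{eq_def_BlackScholes_equ} gives $w_x(t,\zeta+)-w_x(t,\zeta-)=-2\rho\,\partial_t w(t,\zeta)+2(\rho'-\rho)\,\partial_t u_{\rho'}(t,\zeta)$, so excluding a negative minimum of $w$ at $\zeta$ requires a sign on $\partial_t u_{\rho'}(t,\zeta)$, equivalently on $u_{\rho',xx}(t,\zeta)$, which is not available for terminal data $\indic{B}$ with $B$ an arbitrary Borel set. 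The excursion-theoretic route, as written, merely restates the claim to be proved.

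Moreover, $(\ast)$ as you state it --- for \emph{every} Borel $B\subset(0,\infty)\setminus\{\zeta\}$ --- is stronger than what is needed and stronger than what the paper's machinery supports: Lemma~\ref{lem_kernel_monotony} gives that $t\mapsto p_0(t,x,y)\dot m_0(y)$ is increasing on $[0,T]$ only for $y$ outside a compact $K_{x,T}$, while for $y$ near $x$ the free kernel is \emph{decreasing} in $t$ near $t=T$, so replacing $T$ by the smaller random time $\gamma_{\rho'}(T)$ pushes the density in the wrong direction there; there is no reason to expect set-wise domination on sets close to $x$. The repair is to claim the domination only for $B\subset K^c_{x,T}$ with $K_{x,T}$ the compact of Lemma~\ref{lem_kernel_monotony}, and to obtain it as the paper does: combine the pathwise ordering $\gamma_{\rho'}(T)\le\gamma_\rho(T)\le T$ from Lemma~\ref{lem_timechange} with the deterministic time-monotonicity of the kernel on $K^c_{x,T}$, then take expectations. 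Your layer-cake step survives this restriction (the level sets of $|h|\indic{K^c}$ lie in $K^c$ and avoid $\zeta$), and with it your argument collapses onto the paper's proof.
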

	
	For this, we use the following lemma.
	For a proof, we refer the reader to Appendix~\ref{app_lem_kernel_monotony}.
	
	\begin{lemma}
		\label{lem_kernel_monotony}
		Let $[(t,x,y)\ra p_0(t,x,y)] $ and $m_0(\rd y) = \dot m_0( y) \vd y $ be the probability transition kernel and speed measure of the  geometric Brownian motion.
		For all $x>0 $ and $T>0 $, there exists a compact $K_{x,T}$ such that on $(t,y)\in [0,T] \times K^{c}_{x,T} $, $[t\ra p_0(t,x,y) \dot m_0(y)] $ is increasing.
	\end{lemma}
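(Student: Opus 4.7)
The plan is a direct calculation on the explicit log-normal transition density of the driftless geometric Brownian motion, which is the law of $S$ under the \ELMM{} $\Qrob_x$ from Proposition~\ref{prop_riskneutral} in the case $\rho=0$. Specializing~\eqref{eq_txt_sm_sf} to $\mu=0,\rho=0$ gives $\dot m_{0}(y) = 1/(\sigma^{2} y^{2})$, while reading off the log-normal law of $S_{t}$ under $\Qrob_{x}$ yields
\begin{equation}
p_{0}(t,x,y) = \frac{1}{y\sigma\sqrt{2\pi t}} \exp \Bigbraces{ -\frac{(\log(y/x) + \sigma^{2}t/2)^{2}}{2\sigma^{2}t}}.
\end{equation}

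Next, writing $u := \log(y/x)$ and taking the logarithm of the product $p_{0}(t,x,y)\dot m_{0}(y)$, I would expand
\begin{equation}
\frac{(u + \sigma^{2}t/2)^{2}}{2\sigma^{2}t} = \frac{u^{2}}{2\sigma^{2}t} + \frac{u}{2} + \frac{\sigma^{2}t}{8},
\end{equation}
so that differentiation in $t$ gives
\begin{equation}
\partial_{t} \log \bigbraces{p_{0}(t,x,y)\dot m_{0}(y)} = -\frac{1}{2t} + \frac{u^{2}}{2\sigma^{2}t^{2}} - \frac{\sigma^{2}}{8}.
\end{equation}
Multiplying by $2\sigma^{2}t^{2}>0$, positivity of this derivative is equivalent to $u^{2} > \sigma^{2}t + \sigma^{4}t^{2}/4$. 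The right-hand side is increasing in $t$, so on $(0,T]$ it is bounded by $R^{2} := \sigma^{2}T\bigbraces{1 + \sigma^{2}T/4}$.

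Then I would set $K_{x,T} := [x e^{-R}, x e^{R}]$: for $y \in K_{x,T}^{c}$ we have $|u| > R$, hence the above $t$-derivative is strictly positive on $(0,T]$, so $t \mapsto p_{0}(t,x,y)\dot m_{0}(y)$ is strictly increasing there. Since for any such $y \neq x$ the density satisfies $p_{0}(0{+},x,y) = 0$, the monotonicity extends to all of $[0,T]$, as required.

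There is no real obstacle; the computation is routine and the only point to check is that the threshold is governed by the large-$t$ end. The singular term $1/(2t)$ in $\partial_{t}\log$ is harmless because it is dominated by $u^{2}/(2\sigma^{2}t^{2})$ as $t\to 0^{+}$, so the binding constraint occurs at $t=T$, which is precisely what allows one compact $K_{x,T}$ to work uniformly in $t\in[0,T]$.
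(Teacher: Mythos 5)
Your proof is correct and follows essentially the same route as the paper: write out the explicit log-normal transition density of the geometric Brownian motion and differentiate its logarithm in $t$, observing that the term quadratic in $\log(y/x)$ dominates once $y$ leaves a compact neighbourhood of $x$. If anything, your version is more complete than the paper's, which only notes pointwise in $t$ that the derivative becomes positive as $y\to 0$ or $y\to\infty$: you exhibit the explicit threshold $R^{2}=\sigma^{2}T\bigl(1+\sigma^{2}T/4\bigr)$, check that the constraint is binding at $t=T$, and thereby justify that the single compact $K_{x,T}=[xe^{-R},xe^{R}]$ works uniformly for $t\in[0,T]$ (you restrict to the driftless case $\mu=0$, but that is precisely the case needed under the \ELMM{} $\Qrob_x$, and the $t$-independent normalization of $p_0$ versus $p_0\dot m_0$ does not affect the monotonicity).
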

	
	\begin{proof}
		[Proof of Proposition~\ref{cor_replicable_payoff_class}]
		For all $\rho\ge 0 $, let $m_{\rho} $ be the speed measure, and $[(t,x,y) \ra p_{\rho}(t,x,y)] $ the probability transition kernel, of the process $S^{(\rho)} $. 
		Let $\gamma_{\rho} $ be the right--inverse of the time--change $A_{\rho}(t) = t + \rho \loct{S}{(0)}{t} $, $t\ge 0 $. 
		This yields in particular that for all $t\ge 0 $, $\gamma_{\rho}(t)\le t $.
		
		From Lemma~\ref{lem_timechange}, since $S^{(\rho)} = \process{S^{(0)}_{\gamma_{\rho}(t)}} $ in law, for all positive measurable
		function $f$, 
		\begin{equation}
			\label{eq_proof_exp_timechange}
			\Esp^{\Qrob_x} \bigbraces{f(S^{(\rho)}_T)}
			=
			\Esp^{\Qrob_x} \bigbraces{f(S^{(0)}_{\gamma_{\rho}(T)})}
			=
			\Esp^{\Qrob_x} \biggbraces{\int_{(0,\infty)} f(y) p_0 (\gamma_{\rho}(T),x,y) \dot m_{0}( y) \vd y}. 
		\end{equation}
 		Let $K_{x,T} $ be some compact set of $(0,\infty) $ inferred from Lemma~\ref{lem_kernel_monotony}.
 		Hence,  from~\eqref{eq_proof_exp_timechange}, 
		\begin{equation}
			\label{eq_proof_kernek_KKc_decomp}
			\begin{aligned}
				\Esp^{\Qrob_x} \bigbraces{|h(S^{(\rho)}_T|)}
				=& 
				\Esp^{\Qrob_x} \bigbraces{\indic{S^{(\rho)}_T \in K_{x,T}}|h(S^{(\rho)}_T)|}
				+ \Esp^{\Qrob_x} \bigbraces{\indic{S^{(\rho)}_T \in K^{c}_{x,T}}|h(S^{(\rho)}_T)|}
				\\ \le& \xnorm{\indic{K_{x,T}} h}{\infty}
				+ \Esp^{\Qrob_x} \bigbraces{\indic{S^{(0)}_{\gamma_{\rho}(T)} \in K^{c}_{x,T}}|h(S^{(0)}_{\gamma_{\rho}(T)})|}
				\\ \le & \xnorm{\indic{K_{x,T}} h}{\infty} + \Esp^{\Qrob_x} \biggbraces{\int_{0}^{\infty} \indic{y \in K^{c}_{x,T}} |h(y)| p_0 (\gamma_{\rho}(T),x,y) \dot m_{0}( y) \vd y}.		
			\end{aligned}
		\end{equation} 
		Let $0\le \rho < \rho' $. 
		We recall that  for all $T\ge 0$ and $t\in [0,T] $, $0 \le \gamma_{\rho'}(t)\le \gamma_{\rho}(t) \le T$.
		Hence, from the definition of $K_{x,T} $, 
		\begin{equation}
			\Esp^{\Qrob_x} \biggbraces{\int_{K^{c}_{x,T}} |h(y)| p_0 (\gamma_{\rho'}(T),x,y) \dot m_{0}( y) \vd y}
			\le 
			\Esp^{\Qrob_x} \biggbraces{\int_{K^{c}_{x,T}} |h(y)| p_0 (\gamma_{\rho}(T),x,y) \dot m_{0}( y) \vd y}.
		\end{equation}
		Since $h$ is locally bounded, $\xnorm{\indic{K_{x,T}} h}{\infty} < \infty $.
		These prove the inclusion  $\msf H(S^{(\rho)}_T) \cap \Lloc
		\subset \msf H(S^{(\rho')}_T) \cap \Lloc$. 
		This completes the proof.
	\end{proof}

	\begin{remark}
		Proposition~\ref{cor_replicable_payoff_class} 
		ensures that the $L^{1}$-condition of Theorem~\ref{thm_prepricing} is met for all locally bounded payoff functions that are replicable in the standard Black-Scholes model, \ie{} for all $h \in \mathsf H(S^{(0)}_T) \cap \Lloc $.
	\end{remark}

	\begin{figure}[t!]
		\begin{center}
			\includegraphics[alt={Price curves for sticky and non-sticky Bloack-Scholes},width = 0.48\textwidth]{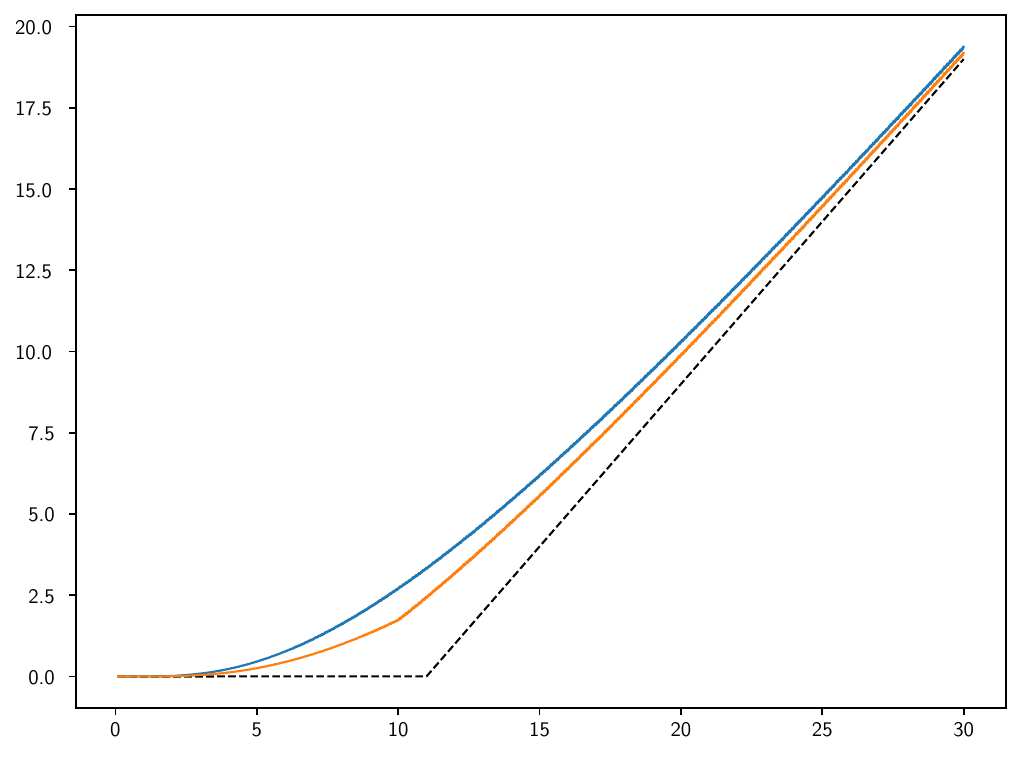}~
			\includegraphics[alt={Delta curves for sticky and non-sticky Bloack-Scholes},width = 0.475\textwidth]{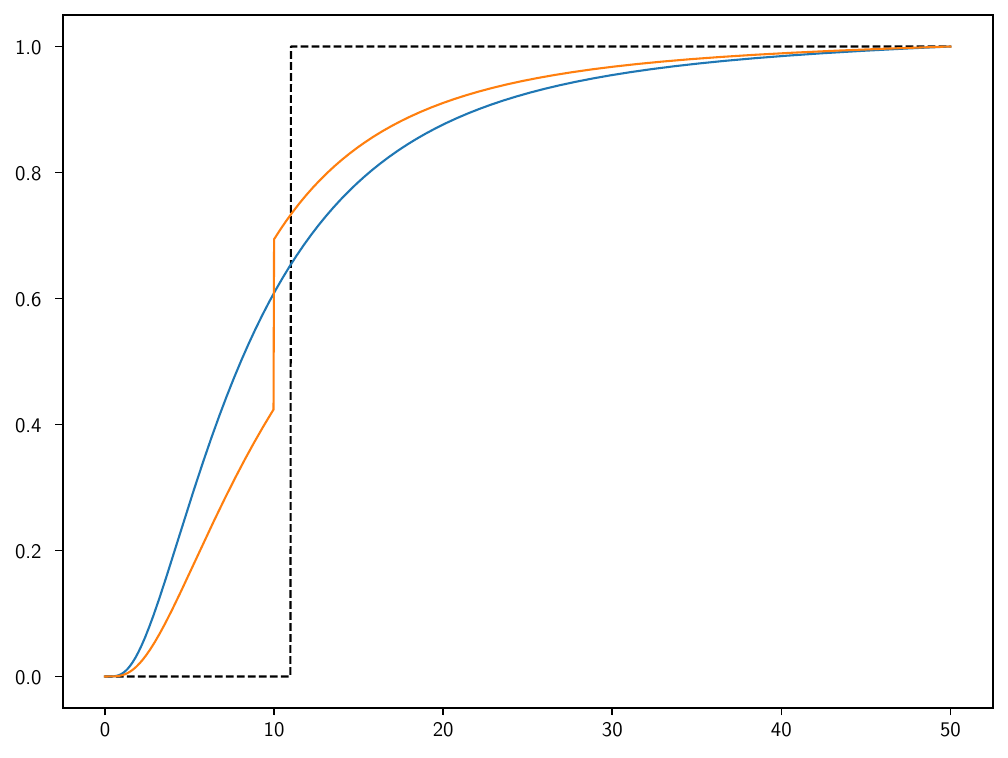}\\
			\includegraphics[alt={Price curves for sticky and non-sticky Bloack-Scholes},width = 0.48\textwidth]{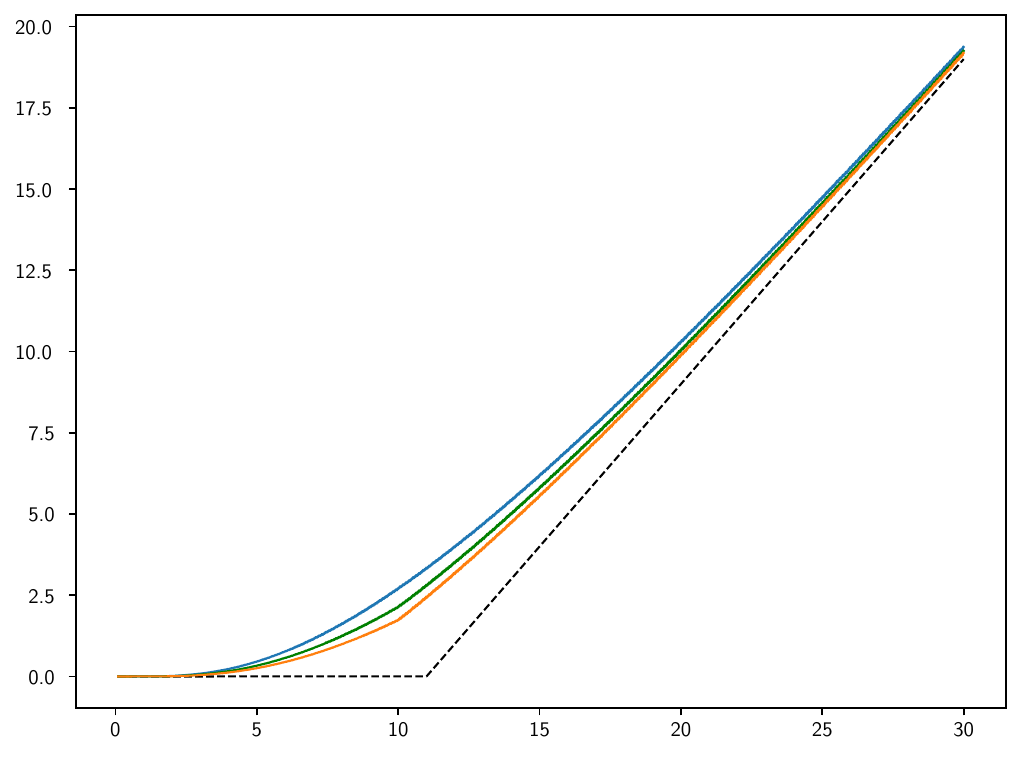}~
			\includegraphics[alt={Delta curves for sticky and non-sticky Bloack-Scholes},width = 0.475\textwidth]{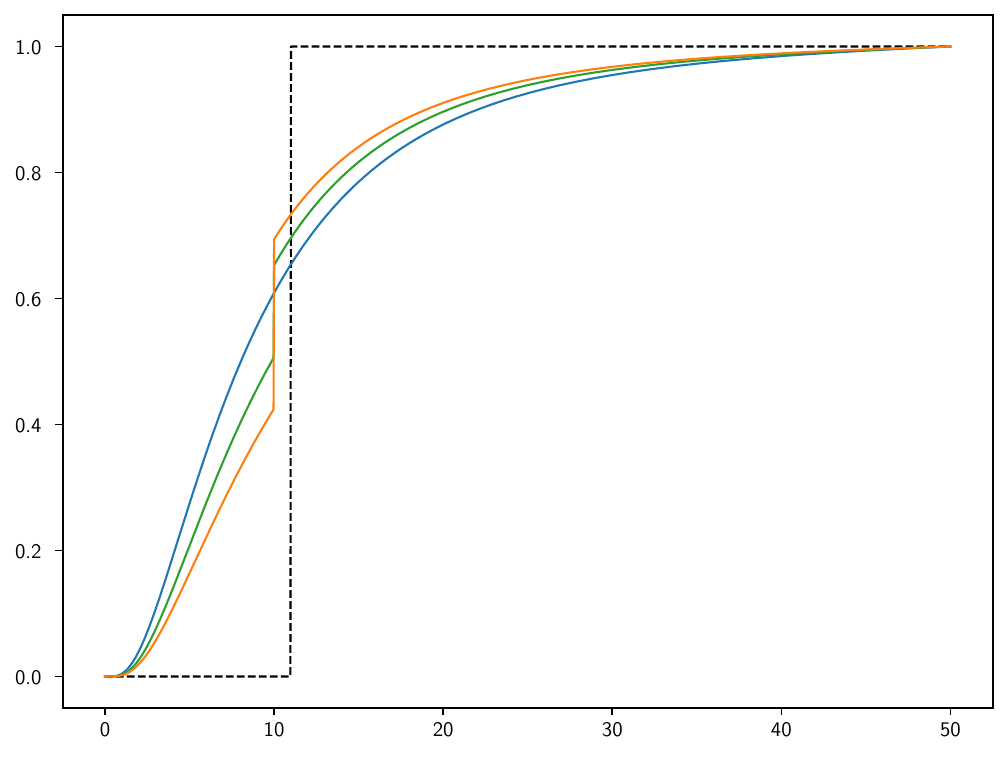}
			\caption{
				\small
				\centering
				\tbf{Price and delta of the Black-Scholes and sticky Black-Scholes.}
				Finite-elements approximations of the price (left) and delta (right) curves of a Call option of strike $K=11 $ and $ T=10$ in the $(r,\sigma)=(0,0.25) $ standard Black-Scholes model (blue), the $(r,\sigma,\rho,\zeta)=(0,0.25,1.0,10.0) $ sticky Black-Scholes model (green)
				and the $(r,\sigma,\rho,\zeta)=(0,0.25,2.0,10.0) $ sticky Black-Scholes model (orange). 
			}
			\label{figure_price_curves2}
		\end{center}
	\end{figure}

	\section{Numerical Experiments}
	\label{sec_numerical}

	This section is dedicated to numerical experiments aimed at evaluating practical aspects of hedging a contingent claim on an asset with a sticky price dynamic. We investigate two key properties. 
	The first property involves the hedging time granularity of the sticky Black-Scholes model, referred to as ``the impact on replication when the agent hedges in discrete time." 
	The second property is model mismatch, or ``the impact on replication when an agent ignores or misrepresents price stickiness."

	In the rest of this section we consider the assets dynamic~\eqref{eq_def_BS_diffusion} with $r=0 $, $\sigma=0.25 $ and various degrees of stickiness~$\rho\in\{0,1,2\}$ at the threshold $\zeta=10 $. 
	We recall that a stickiness value of $\rho=0 $ corresponds to the standard Black-Scholes dynamic. 
	The claim of interest will be a Call option with strike $K$ that delivers $(S_{T}-K)_{+} $ at maturity $T=10 $.
	
	The approximation scheme we use to simulate sample paths of the sticky geometric Brownian motion is the Space--Time Markov Chain Approximation or STMCA of one--dimensional
	diffusions~\cite{AnaLejVil}.
	
	\begin{algorithm}[t!]
		\caption{Delta Hedging of an Option}\label{algo_delta_hedging_sticky}
		\KwData{Stock price time-series $\process{S_t}$, Strike price $K$, Time to maturity $T$, Volatility $\sigma$, Stickiness $\rho$, Hedging window $\Delta t$}
		\KwResult{Deltas and Portfolio Values over time}
		\medskip
		Compute or approximate the price function \texttt{u\_price}$(x,K,T,\sigma,\rho)$\;
		Compute or approximate the delta function \texttt{u\_delta}$(x,K,T,\sigma,\rho)$\;
		\medskip
		Initialize $cash$ to \texttt{u\_price}$(S_0,K,T,\sigma,\rho)$\;
		Initialize $t$ to $0$\;
		Initialize $old\_delta$ to $0$\;
		\medskip
		\While{$t < T$}{
			Update time: $t \leftarrow t + \Delta t$\;
			Retrieve current stock price $S_t$\;
			Calculate current option value: $price \leftarrow$ \texttt{u\_price}$(S_t, K, T - t, \sigma, \rho)$\;
			Calculate current option delta: $delta \leftarrow$ \texttt{u\_delta}$(S_t, K, T - t, \sigma, \rho)$\;
			Update cash position: $cash \leftarrow cash - (delta - old\_delta) \times S_t$\;
			Set $old\_delta \leftarrow delta$\;
			Update portfolio value: $portfolio\_value \leftarrow cash + (delta \times S_t) - price$\;
			Append $delta$ to the list of deltas\;
			Append $portfolio\_value$ to the list of portfolio values\;
		}
		\medskip
		\Return{List of deltas, List of portfolio values}\;
	\end{algorithm}

	\subsection{Hedging time-granularity}\label{ssec_discrete_time_hedging}
	
	The message of the Black-Scholes model is as follows:
	It is possible to replicate the payoff of a European option using a self-financing portfolio composed of quantities of the risky and non-risky assets. In particular, the option seller can hedge their risk by holding a portfolio that, at each time $t$, comprises $\delta_t = v_x(t,S_t) $ parts of the risky asset.
	
	However, a practical challenge arises since achieving this requires continuous rebalancing of the portfolio composition over time. This, as mentioned in~\cite{Bertsimas2001}, proves impossible in real-world scenarios as it necessitates an infinite number of operations. One way to address this is to introduce a hedging window $h>0$, resulting in hedging times $(t_{i})_i $ such that $t_{i+1}-t_{i} = h $. The agent can then rebalance the portfolio at each time $t_i $ so that for all $i $, they hold $v_{x}(t_i,S_{t_i}) $ parts of the risky asset on $[t_i,t_{i+1}) $. 
	The resulting discretized version of the continuous-time hedging strategy $\delta $ is the strategy $\delta^{(N)} $
	defined for all $t $ by 
	\begin{equation}
		\delta^{(N)}_t = \delta_{[Nt]/N},
	\end{equation}
	where $[x]=\sup\{k\in \IN: k \le x\} $.
	
	The PnLs at $t$ of the strategies $\delta $ and $ \delta^{(N)}$ are 
	\begin{align}
		V_t(\delta) - V_0(\delta) &= \int_{0}^{t} \delta_s \vd S_s,
		&
		V_t(\delta^{(N)}) - V_0(\delta) &= \int_{0}^{t} \delta^{(N)}_s \vd S_s,
	\end{align}
	where from Theorem~\ref{thm_prepricing},
	the option premium is
	$V_0(\delta)= \Esp^{\Qrob} \bigbraces{h(S_T)} $
	and
	$V_t(\delta)= \Esp^{\Qrob} \bigbraces{h(S_T)| \bF_t} = Q^{S,\Qrob}_{S_t} \bigbraces{h(Y_{T-t})} $.
	
	This discrete replication strategy, outlined in Algorithm~\ref{algo_delta_hedging_sticky}, results in replication errors known as the \textit{tracking error}, defined by 
	\begin{equation}
		\varepsilon^{(N)}_t := V_{t}(\delta^{(N)})-V_t(\delta).
	\end{equation}
	In~\cite{Bertsimas2001}, the metric used to assess the replication quality is the mean squared tracking error, defined by   
	\begin{equation}
		\mathsf R_T^{(N)} := \sqrt{\Esp^{\Qrob}\bigsqbraces{(\varepsilon^{(N)}_T)^{2}}}.
	\end{equation}

	Instead of this, we will quantify the replication properties of each model with the pair $\bigbraces{\Esp^{\Qrob} \sqbraces{\varepsilon^{(N)}_t},\bigbraces{\Var^{\Qrob} \sqbraces{\varepsilon^{(N)}_t}}^{1/2}} $, where $\Esp^{\Qrob} \sqbraces{\varepsilon^{(N)}_t} $ is the expected replication PnL and $\bigbraces{\Var^{\Qrob} \sqbraces{\varepsilon^{(N)}_t}}^{1/2} $ is the replication PnL standard deviation. The reason is that $\mathsf R_t^{(N)}$ is more adapted to the case where $\Esp^{\Qrob}\bigsqbraces{\varepsilon^{(N)}_t} =0$. If an agent is unsure about the pricing/replication model he applies, he is likely to miscalculate the option premium and end up with non-zero PnL, making the metric $\mathsf R_t^{(N)}$ irrelevant. This is the object of Section~\ref{ssec_model_mismatch}. Another reason for this choice is that the pair 
	$\bigbraces{\Esp^{\Qrob} \sqbraces{\varepsilon^{(N)}_t}
		,\bigbraces{\Var^{\Qrob} \sqbraces{\varepsilon^{(N)}_t}}^{1/2}} $
	contains the information in $\mathsf R_t^{(N)}$. Indeed, the three quantities are linked with the relation:
	\begin{equation}
		\Var^{\Qrob} \sqbraces{\varepsilon^{(N)}_t} 
		= \bigbraces{\mathsf R_t^{(N)}}^{2}
		- \bigbraces{\Esp^{\Qrob} \sqbraces{\varepsilon^{(N)}_t}}^{2}.
	\end{equation}

	\begin{table}
		\begin{center}
			\begin{tabular}{@{}l|rrr@{}}
				\multicolumn{4}{c}{Stickiness $\rho=0 $} \\
				N  & \multicolumn{1}{c}{premium} &\multicolumn{1}{c}{ $\widehat \mu_{\MC} $} &\multicolumn{1}{c}{ $\widehat{ \sigma}^{(N)}_{\MC} $ }\\ 	
				\hline 
				2000 & 3.03923 & -0.03438 & 0.06251 \\ 
				1000 & 3.03923 & -0.03481 & 0.08605 \\ 
				250 & 3.03923 & -0.03687 & 0.16781 \\ 
				100 & 3.03923 & -0.04529 & 0.26832 \\ 
				10 & 3.03923 & -0.00601 & 0.69078
			\end{tabular}
			\quad~\quad
			\begin{tabular}{@{}l|rrr@{}}
				\multicolumn{4}{c}{Stickiness $\rho=1 $} \\
				N  & \multicolumn{1}{c}{premium} &\multicolumn{1}{c}{ $\widehat \mu_{\MC} $} &\multicolumn{1}{c}{ $\widehat{ \sigma}^{(N)}_{\MC} $ }\\ 	
				\hline 
				2000 & 2.43607 & -0.02426 & 0.08783 \\ 
				1000 & 2.43607 & -0.01573 & 0.17748 \\ 
				250 & 2.43607 & -0.01959 & 0.29477 \\ 
				100 & 2.43607 & -0.02013  & 0.40177 \\ 
				10 & 2.43607 & -0.00182  & 0.80982
			\end{tabular}
			\\[10pt]
			\begin{tabular}{@{}l|rrr@{}}
				\multicolumn{4}{c}{Stickiness $\rho=2 $} \\
				N  & \multicolumn{1}{c}{premium} &\multicolumn{1}{c}{ $\widehat \mu_{\MC} $} &\multicolumn{1}{c}{ $\widehat{ \sigma}^{(N)}_{\MC} $ }\\ 	
				\hline 
				2000 & 1.9976 & -0.01487  & 0.04602 \\ 
				1000 & 1.9976 & -0.01502  & 0.18474 \\ 
				250 & 1.9976 & -0.03455  & 0.38181 \\ 
				100 & 1.9976 & -0.03731  & 0.51832 \\ 
				10 & 1.9976 & 0.01637  & 0.96878
			\end{tabular}
		\end{center}
		\caption{
			\centering	
			\small
			Premiums and Monte Carlo estimation of the expected replication error (1000 simulated trajectories) 
			for a Call $(K,T)=(10,10) $
			for various stickiness and hedging windows.
			$N_{\MC}=1000 $.
		}
		\label{table_granularity}
	\end{table}

	\begin{figure}
		\begin{center}
			\includegraphics[alt={Discrete time hedging non-sticky},width = 0.8\textwidth]{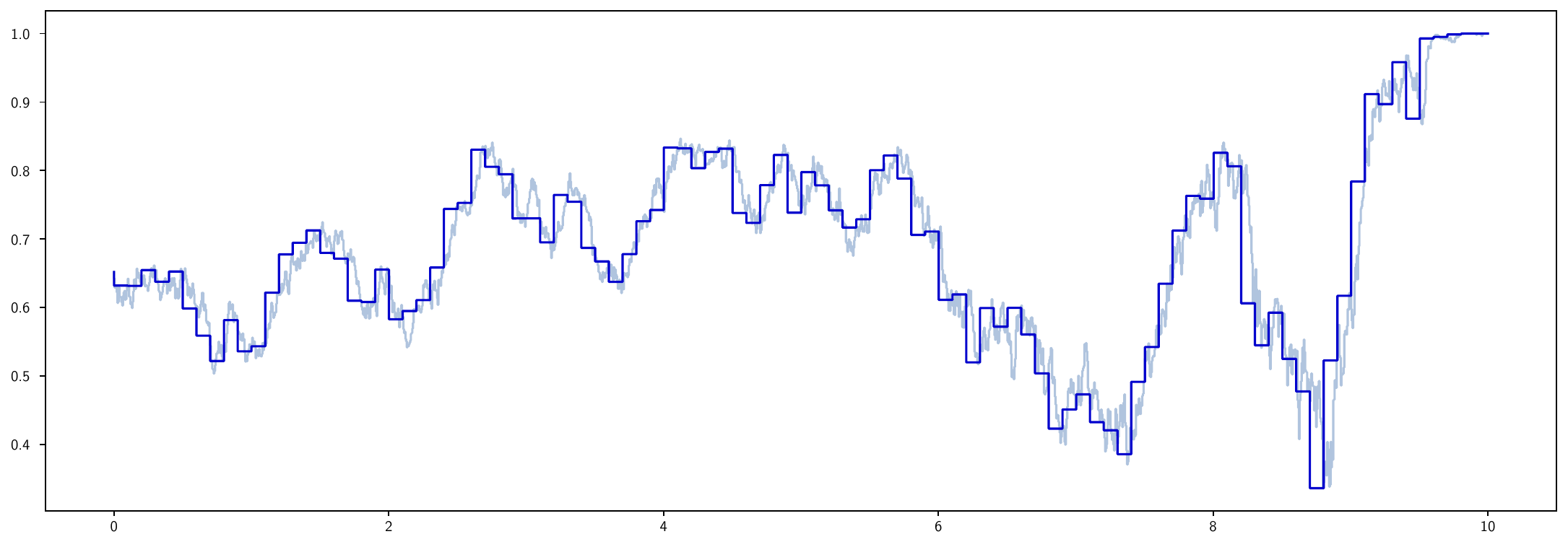}\\
			\includegraphics[alt={Discrete time hedging sticky},width = 0.8\textwidth]{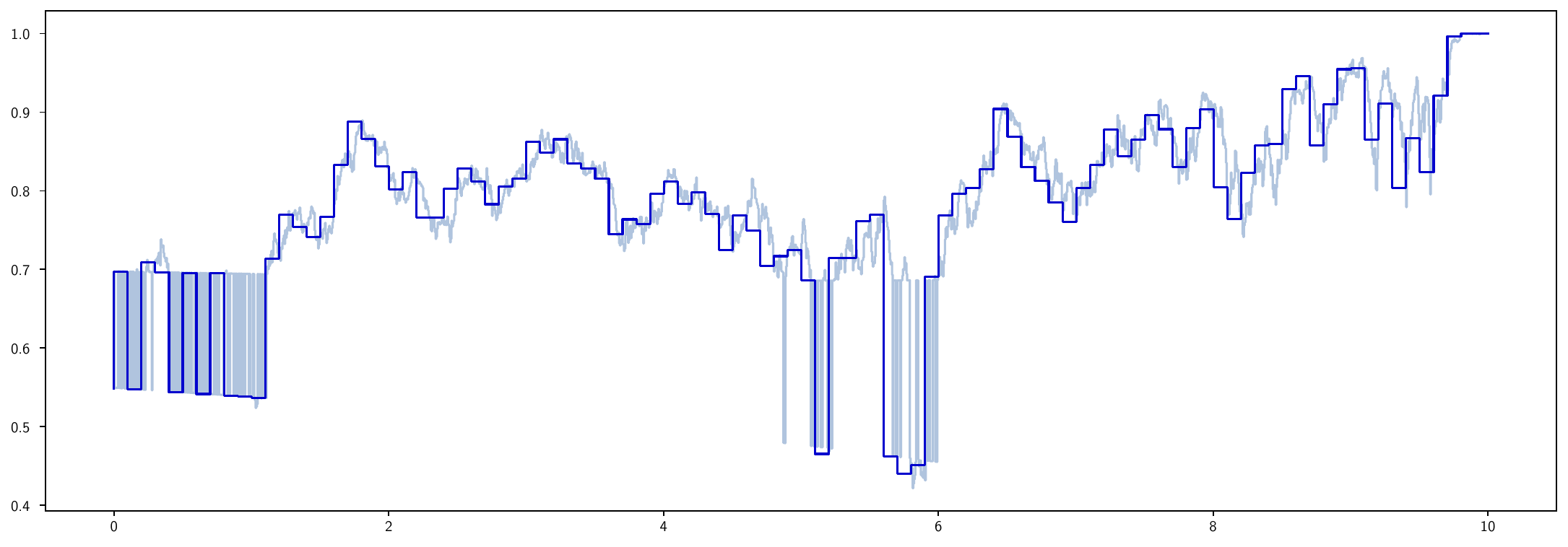}
			\caption{\small
				\centering 
				(top): Parts of risky asset in the payoff replication strategy for one sample path of $S$ with $\rho=0 $, for (dark blue) $N=100$ and (light blue) $N=2000$.
				(bottom): Same, but with $\rho=1 $.
			}
			\label{fig:time_granularityB}
		\end{center}
	\end{figure}

	We assess the discrete-time hedging error of the sticky Black-Scholes model as follows. 
	We sample trajectories of the sticky geometric Brownian motion.
	Then, we apply Algorithm~\ref{algo_delta_hedging_sticky} to hedge a call option 
	for various hedging windows.
	Last, we compute the empiric mean and standard deviation
	of the PnL with 
	\begin{align}
		\widehat \mu^{(N)}_{\MC} &= \frac{1}{N_{\MC}} \sum_{i=1}^{N_{\MC}} \varepsilon^{(N)}_{T}(\omega_i),
		&
		\widehat{ \sigma}^{(N)}_{\MC} &=  \frac{1}{\sqrt{N_{\MC}}} \sqrt{  \sum_{i=1}^{N_{\MC}} \bigbraces{\varepsilon^{(N)}_{T}(\omega_i) - \widehat \mu_{\MC}}^{2} }.
	\end{align}

	We have no closed form expressions for the replication strategy and the \APP{} prices of claims in the sticky Black-Scholes model. 
	Therefore, we use a finite difference approximation to numerically solve equation~\eqref{eq_def_BlackScholes_equ}.
	Simulation results are summarized in Table~\ref{table_granularity}.

	From numerical simulations, we observe the following:
	
	\begin{enumerate}
		\item The replication errors decreases the higher the hedging frequency $N$ is.
		Convergence seems to occur at the same rate as for the standard Black-Scholes model.
		In~\cite{Bertsimas2001}, it is proven that the rate of convergence for the Black-Scholes model is $\mathcal O (N^{-1/2}) $.
		\item Higher stickiness leads to higher tracking errors for coarse hedging time-windows. 
		We attribute this behavior to the inherent discontinuity of the delta function at the sticky threshold. 
		As depicted in Figure~\ref{figure_price_curves2}, higher stickiness at $\zeta$ corresponds to more substantial jumps in the delta, resulting in sudden shifts in portfolio composition whenever the price process $S$ crosses the sticky threshold. 
		Consequently, the discrete-time hedging portfolio manifests substantial deviations in portfolio composition from the ideal configuration whenever the price crosses the sticky threshold. 
		This is illustrated in Figure~\ref{fig:time_granularityB}, where spikes and oscillations in the portfolio composition for fine-grained hedging do not appear in coarser regimes. 
	\end{enumerate}

	\subsection{Model mismatch: constant volatility}\label{ssec_model_mismatch}
	
	Now, let's assume that an agent intends to hedge a contingent claim on an underlying asset with a sticky geometric Brownian motion price dynamic  with parameters $(\sigma,\rho,\zeta)$. 
	The agent considers the following models:

	\paragraph{Model 1:} The standard Black-Scholes model with volatility $\sigma $.
	
	\paragraph{Model 2:} The standard Black-Scholes model with volatility $ \Sigma_n (X) $, where for some $T'>0 $, $ \Sigma_n (X) $ is the estimator of the realized volatility on $[-T',0)$ of the current sample path. The estimator is defined by 
	\begin{equation}
		\Sigma_n (X) 
		= 
		\frac{1}{\sqrt{T'}}
		\sqrt{
			\sum_{i=1}^{[nT']} 
			\bigbraces{\log S_{-\frac{i}{n}} - \log S_{-\frac{i+1}{n}}}^{2}}.
	\end{equation}
	
	\paragraph{Model 3:} The sticky Black-Scholes model with parameters $(\sigma,\rho,\zeta) $.

	\begin{table}[t]
		\begin{center}
			\begin{tabular}{@{}c|llrrr@{}}
				\multicolumn{5}{c}{Model 1} \\
				$\rho$  & prem. & \APP & $\MP$  & \multicolumn{1}{c}{ $\widehat \mu_{\MC} $} &\multicolumn{1}{c}{ $\widehat{ \sigma}^{(N)}_{\MC} $ } 
				\\ 
				\hline 
				0 & 3.073 & 3.039 & 0.034 & -0.0018 & 0.057 \\ 
				1 & 3.073 & 2.436 & 0.637 & 0.6224 & 0.515 \\ 
				2 & 3.073 & 1.997 & 1.076 & 1.0551 & 0.727 
			\end{tabular}
			~
			\begin{tabular}{@{}c|lrrrr@{}}
				\multicolumn{5}{c}{Model 2} \\
				$\rho$  & prem. & \APP & $\MP$  &\multicolumn{1}{c}{ $\widehat \mu_{\MC} $} &\multicolumn{1}{c}{ $\widehat{ \sigma}^{(N)}_{\MC} $ } 
				  \\ 
				\hline 
				0 & 3.076 & 3.039  & 0.037  &  0.0016 & 0.077\\ 
				1 & 2.670 & 2.436 & 0.234 & 0.2172 & 0.508  \\ 
				2 & 2.359 & 1.997 & 0.362  & 0.3238 & 0.735
			\end{tabular}
			\\[10pt]
			\begin{tabular}{@{}c|lrrrr@{}}
				\multicolumn{5}{c}{Model 3} \\
				$\rho$  & prem. & \APP & $\MP$   &\multicolumn{1}{c}{ $\widehat \mu_{\MC} $} &\multicolumn{1}{c}{ $\widehat{ \sigma}^{(N)}_{\MC} $ }  \\ 
				\hline 
				0 & 3.039  & 3.039 & 0& -0.0343 & 0.062 \\ 
				1 & 2.436  & 2.436 & 0& -0.0242 & 0.087 \\ 
				2 & 1.997  & 1.997 & 0& -0.0014 & 0.046 \\
			\end{tabular}			
		\end{center}
		\caption{
			\small
			\centering	
			Theoretical premiums in each model, \APP{} prices, mean PnLs and tracking errors for a Call $(K,T)=(10,10) $
			on an assets with price-processes $\sigma=0.25 $ sticky at $\zeta=10 $ for various degrees of stickiness $\rho $ 
			in each model.
			The hedging frequency is $N= 2000$ and the number of simulated sample paths of each process is
			$N_{\MC}=1000 $.
		}
		\label{table_model_mismatch}
	\end{table}

	According to Theorem~\ref{thm_prepricing}, the replication cost for the claim with payoff $h(S_T)$ is $\Esp^{\Qrob}\sqbraces{h(S_T)} $. This matches the \APP{} price or premium in Model~3, i.e., 
	\begin{equation}
		\text{\APP{}} \text{ price} = \bigbraces{\text{Premium in Model 3}} = \Esp^{\Qrob}\sqbraces{h(S_T)}.
	\end{equation}
	Therefore, the premium is miscalculated in Models~1 and 2. To address the premium miscalculation in each model, we introduce the mispricing error, defined as 
	\begin{equation}
		\MP  :=  \bigbraces{\text{Premium in Model }i} - \text{\APP{}} \text{ price}.
	\end{equation}
	This error compensates for the error made by the agent when using an incorrect model to compute premiums for contingent claims.
	It is to be compared with the empiric mean PnL of the replication portfolios $\widehat \mu_{\MC}  $.
	The discrepency between values between Model~1 and Model~3 when $\rho=0 $ is due to error incurred from the finite difference approximations of quantities in Model~3.
	
	We observe the following: 
	
	\begin{enumerate}
		\item We observe in Models~1 that the mispricing error $\MP $ is close to the empiric mean PnL of the replication portfolios $\widehat \mu_{\MC}  $. 
		This is due to two facts. 
		First, the agent ultimately holds $h(S_T) $ at maturity. 
		Second, given that the price dynamics are martingales on $(\Omega,\process{\bF_t},\Qrob) $, from \eqref{eq_thm_replication} the self-financing portfolio is also a martingale.
		Hence, the expected PnL of any self-financing portfolio is $0$
		and the expected PnL of the option seller is
		\begin{equation}
			\text{Premium} -\Esp^{\Qrob} \sqbraces{h(S_T)},
		\end{equation}
		which equals $0$ only when pricing with the correct model, Model~3. 
		\item Among Models~1, 2 and 3, it is Model~3 that exhibits the lowest hedging error, indicating superior replication properties (see Table~\ref{table_model_mismatch}).
	\end{enumerate}

	\subsection{Model mismatch: smooth SDE}\label{ssec_model_mismatchB}

	We now suppose the agent misrepresents price stickiness as local volatility in a smooth SDE model.
	
	It is possible to approximate in law diffusions with sticky features by smooth diffusions.
	Let $X$ be the diffusion on $(0,\infty) $, on natural scale,
	of speed measure
	\begin{equation}
		\begin{aligned}
			m(\rd x)
			&= \frac{2}{\sigma^{2}x^{2}}  \vd x +  \rho  \delta_{\zeta}(\rd x),
			& x&>0.
		\end{aligned}
	\end{equation}
	It corresponds to the law of $S$ under $\Qrob $, see~\eqref{eq_prop_risk_neutral_diffusion}.
	We can define an approximating sequence $(X^{n})_n $ of $S$ in law as follows.
	For all $ n$: let $X^{n} $ solve $\vd X^{n}_t = \sigma_n(X^{n}_t) \vd B^{n}_t  $ with $B^{n} $ a standard Brownian motion and $\sigma_n $ a smooth function, such that  
	\begin{equation}
		\frac{2}{\sigma_{n}^{2}(x)}  \vd x
		\xrightarrow[n\rightarrow \infty]{\text{weakly}} m(\rd x).
	\end{equation}
	From \cite{brooks1982weak}, $X^{n}\lra X $ in law.
	
	We make two key observations.
	First, the prices in the smooth models converge to the prices in the sticky model.
	Indeed, let $v $ be the price function and $(Q_{x}^{X})_x$ the Markovian family of the sticky model $X$, and, for all $n\in \IN$, let $v^{(n)} $ be the price function 
	and $ (Q_{x}^{X^{n}})_x$ the Markovian family of the smooth model $X^{n} $.
	The convergence in law implies that 
	\begin{equation}
		\begin{aligned}
			\lim_{n\rightarrow \infty} v^{(n)}(t,x) 
			= \lim_{n\rightarrow \infty}  Q^{X^{n}}_x \bigbraces{h(Y_{T-t})}
			= Q^{X}_x \bigbraces{h(Y_{T-t})}
			= v(t,x),
		\end{aligned}
	\end{equation} 
	where $Y$ is the coordinate process. 
	Second, while the deltas in the smooth models are continuous, the delta of the sticky model is discontinuous (see Remark~\ref{rmk_discontinuity}).
	Indeed, due to smoothness of coefficient $\sigma_n $, the function $x \rightarrow v^{(n)}(t,x)$ is also smooth, for all $t\in [0,T] $ and $n\in \IN $.

	\begin{table}[h!]
		\begin{center}
			\begin{tabular}{@{}l|rrr@{}}
				\multicolumn{4}{c}{Stickiness $\rho=0 $} \\
				N  & \multicolumn{1}{c}{premium} &\multicolumn{1}{c}{ $\widehat \mu_{\MC} $} &\multicolumn{1}{c}{ $\widehat{ \sigma}^{(N)}_{\MC} $ }\\ 	
				\hline 
				2000 & 3.03923 & -0.03438 & 0.06251 \\ 
				1000 & 3.03923 & -0.03481 & 0.08605 \\ 
				250 & 3.03923 & -0.03687 & 0.16781 \\ 
				100 & 3.03923 & -0.04529 & 0.26832 \\ 
				10 & 3.03923 & -0.00601 & 0.69078
			\end{tabular}
			\quad~\quad
			\begin{tabular}{@{}l|rrr@{}}
				\multicolumn{4}{c}{Stickiness $\rho=1 $} \\
				N  & \multicolumn{1}{c}{premium} &\multicolumn{1}{c}{ $\widehat \mu_{\MC} $} &\multicolumn{1}{c}{ $\widehat{ \sigma}^{(N)}_{\MC} $ }\\ 	
				\hline 
				2000 & 2.43607 & -0.02311 & 0.17189 \\ 
				1000 & 2.43607 & -0.01909 & 0.19802 \\ 
				250 & 2.43607 & -0.02165 & 0.29681 \\ 
				100 & 2.43607 & -0.02050 & 0.38760 \\ 
				10 & 2.43607 & -0.01681 & 0.78004
			\end{tabular}
			\\[10pt]
			\begin{tabular}{@{}l|rrr@{}}
				\multicolumn{4}{c}{Stickiness $\rho=2 $} \\
				N  & \multicolumn{1}{c}{premium} &\multicolumn{1}{c}{ $\widehat \mu_{\MC} $} &\multicolumn{1}{c}{ $\widehat{ \sigma}^{(N)}_{\MC} $ }\\ 	
				\hline 
				2000 & 1.9976 & -0.00126 & 0.24402 \\ 
				1000 & 1.9976 & -0.00939 & 0.28132 \\ 
				250 & 1.9976 & -0.02760 & 0.40503 \\ 
				100 & 1.9976 & -0.02075 & 0.52987 \\ 
				10 & 1.9976 & 0.00303 & 0.94837
			\end{tabular}
		\end{center}
		\caption{
			\centering	
			\small
			Same as Table~\ref{table_granularity} but for the locally linearly interpolated price curve around $\zeta $.
		}
		\label{table_granularityM}
	\end{table}

	In Section~\ref{ssec_discrete_time_hedging}, we approximated the delta function using an interpolation of the finite-difference scheme that preserved the discontinuity.
	In this section, we use an approximation that does not preserve the discontinuity.
	Thus, the delta for the two approximations differ only locally around the threshold of stickiness.
	We observe that the second method produces high tracking error (see  Table~\ref{table_granularityM}, to compare with Table~\ref{table_granularity}).
	This indicates that hedging a sticky diffusion using a smooth model results in some significative irreductible tracking error. 
	The reason is that a smooth function of the spot price cannot replicate the discontinuity of the true sticky hedging strategy.

	\section{Conclusion}
	\label{sec_conclusion}
	
	In conclusion, we have demonstrated that the sticky Black-Scholes model satisfies the No Free Lunch with Vanishing Risk \NFLVR{} and is consistent with arbitrage-free pricing theory only when the interest rate $r=0 $. Under this condition, we proved the existence of a replication strategy and that in this model there is no uniqueness in both the replication strategy and the equivalent local martingale measure \ELMM{}. Moreover, we established that any locally bounded payoff that can be replicated in the standard Black-Scholes model can also be replicated in the sticky model.
	
	Our numerical experiments indicate that the mean squared tracking error of discrete-time hedging converges at a rate of 
	$\mathcal O(N^{1/2}) $, to $0$ as $N$ diverges to $\infty $, matching the rate in the standard Black-Scholes model. We also observed that ignoring or misrepresenting stickiness results in significant irreducible tracking errors, emphasizing the importance of using hedging strategies that accurately account for sticky price features.
	
	This work leaves several topics for future investigation. First, exploring whether the model with $r \not = 0 $ remains arbitrage-free under proportional or fixed transaction costs. Second, identifying the class of payoffs for which the pricing problem stated in Theorem~\ref{thm_pricing} admits a unique classical solution. Lastly, determining the rate of convergence of the mean squared tracking error that arises from discrete-time hedging.
	
	\appendix

	\section{Proof--assisting results}
	\label{app_proofs}

	\subsection{Proof of Lemma~\ref{lem_timechange}}
	\label{app_lem_timechange}

	For reader's convenience, we recall the statement of Lemma~\ref{lem_timechange}.
	
	\begin{lemmaOhne}
		Let $X$ be a regular diffusion on $J $, an open interval of $\IR $, on natural scale of speed measure $m$ so that 
		$\rho := m(\{\zeta\})\in (0,\infty) $.
		We suppose $X$ to be defined on the probability space $\mc P_x = (\Omega,\process{\bF_t},\Prob_x) $.
		There exists a regular diffusion $Z $ on $ J$, on natural scale, with speed measure $m_Z = m - \rho \delta_{\zeta} $, defined on an extension of $\mc P_x $,  such that 
		\begin{equation}
			\forall t\ge 0:\qquad
			X_{t} = Z_{\gamma_{\rho}(t)},
			\qquad A_{\rho}(t) = t + \rho \loct{Z}{\zeta}{t},
		\end{equation}
		where $\gamma_{\rho} $ is the right-inverse of $A_{\rho} $.
	\end{lemmaOhne}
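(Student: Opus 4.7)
The plan is to construct $Z$ explicitly from $X$ by time-changing away the time $X$ spends at $\zeta$, and then read off the asserted identities. Since $X$ is a regular diffusion on natural scale with speed measure $m$ and $m(\{\zeta\})=\rho$, the occupation time formula (see \cite[Ch.~VII]{RevYor}) yields
\begin{equation}
\int_{0}^{t} \indic{X_s=\zeta}\vd s \;=\; m(\{\zeta\}) \loct{X}{\zeta}{t} \;=\; \rho \loct{X}{\zeta}{t}.
\end{equation}
Define the continuous non-decreasing additive functional
\begin{equation}
C_t \;:=\; t - \rho \loct{X}{\zeta}{t} \;=\; \int_{0}^{t} \indic{X_s\neq\zeta}\vd s,
\end{equation}
its right-inverse $\sigma_u := \inf\{t\ge 0 : C_t > u\}$, and the time-changed process $Z_u := X_{\sigma_u}$, defined on a (possibly extended) probability space to accommodate the strict Markov property on the time-changed filtration.

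Next I would verify that $Z$ is a regular diffusion on $J$, on natural scale, with speed measure $m_Z = m - \rho \delta_{\zeta}$. Strong Markovianity of $Z$ follows from the general theory of time-changes of strong Markov processes by strictly increasing continuous additive functionals (Volkonskii's theorem; see \cite[Ch.~V]{RevYor}), and natural scale is preserved since $(Z_u)$ visits the same points and the local martingale property of $h(X)$ on compact subintervals of $J$ is invariant under the time change. To compute the speed measure, for any bounded measurable $f$ and any $y\neq \zeta$ the occupation formula for $X$ restricted to times where $X\neq\zeta$ gives
\begin{equation}
\int_{0}^{u} f(Z_v)\vd v \;=\; \int_{0}^{\sigma_u} f(X_t)\indic{X_t\neq\zeta}\vd t \;=\; \int_{J\setminus\{\zeta\}} f(y)\, \loct{X}{y}{\sigma_u}\, m(\rd y),
\end{equation}
while $\int_0^u \indic{Z_v=\zeta}\vd v = 0$ because $\sigma$ skips the sticky intervals; matching this against $\int f(y)\loct{Z}{y}{u} m_Z(\rd y)$ and using that local times are invariant under continuous time changes, i.e.\ $\loct{Z}{y}{u} = \loct{X}{y}{\sigma_u}$, identifies $m_Z$ as $m - \rho \delta_{\zeta}$.

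The two asserted identities then follow almost algebraically. Setting $t = \sigma_u$ in the definition $C_t = t - \rho \loct{X}{\zeta}{t}$ yields $u = \sigma_u - \rho \loct{X}{\zeta}{\sigma_u}$, and combining with $\loct{Z}{\zeta}{u} = \loct{X}{\zeta}{\sigma_u}$ gives $\sigma_u = u + \rho \loct{Z}{\zeta}{u}$, so $\sigma = A_\rho$ and $\gamma_\rho = C$. Finally $X_t = Z_{C_t} = Z_{\gamma_\rho(t)}$, with the identity holding also on sticky intervals of $X$ because $X$ and $Z_{\gamma_\rho(\cdot)}$ are both equal to $\zeta$ there.

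\textbf{Main obstacle.} The delicate step is the rigorous identification of the speed measure $m_Z$, which requires matching two occupation-time representations and invoking the correct normalization for the local times at $\zeta$ under time change; in particular one must be careful that the diffusion local time used in the occupation formula for $X$ (normalized by $m$) coincides, via the Tanaka-type semimartingale local time, with the one used for $Z$ (normalized by $m_Z$), so that the identity $\loct{Z}{\zeta}{u} = \loct{X}{\zeta}{\sigma_u}$ is consistent. Once this is secured, the remaining algebra for $A_\rho$ and $\gamma_\rho$ is immediate.
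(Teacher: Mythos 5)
Your construction is correct, but it runs in the opposite direction from the paper's. You build $Z$ directly from $X$ by excising the Lebesgue time spent at $\zeta$, i.e.\ by inverting $C_t=\int_0^t \indic{X_s\neq\zeta}\vd s$; the paper instead invokes Rogers--Williams (Theorem~V.47.1) to realise \emph{both} $X$ and $Z$ as time-changes $W_{\gamma_X(t)}$ and $W_{\gamma_Z(t)}$ of a single Brownian motion $W$, with $A_X(t)=\int_J \loct{W}{y}{t}\,m(\rd y)$ and $A_Z(t)=\int_J \loct{W}{y}{t}\,m_Z(\rd y)$, and then obtains $A_\rho=\gamma_X^{-1}\circ\gamma_Z$ by composing the two changes of clock. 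The payoff of the paper's route is that the regularity, natural scale and speed measure of $Z$ come for free from the Rogers--Williams representation, whereas in your route these must be verified by matching occupation-time formulas — which is exactly where the one delicate point you flag lives. The identity $\loct{Z}{y}{u}=\loct{X}{y}{\sigma_u}$ with consistent normalisation is precisely the content of \cite[Exercise~VI.1.27]{RevYor}, which is also the ingredient the paper uses to convert $\loct{W}{\zeta}{\cdot}$ into $\loct{Z}{\zeta}{\cdot}$; citing it closes your gap at $y=\zeta$ (where it matters for $A_\rho(t)=t+\rho\loct{Z}{\zeta}{t}$, not merely for the speed-measure identification, since $m_Z$ has no atom there). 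Two smaller points: you should justify that $C$ is \emph{strictly} increasing (the level set $\{s:X_s=\zeta\}$ has empty interior because $\zeta$ is instantaneous for a regular non-absorbing diffusion), so that $\sigma\circ C=\mathrm{id}$ and $X_t=Z_{C_t}$ holds for all $t$; and the extension of $\mc P_x$ in the statement is needed for the paper's Brownian motion $W$, while your construction lives on the original space, which is harmless.
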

	
	\begin{proof}
		From~\cite[Theorem~V.47.1]{RogWilV2}, there exists a standard Brownian motion $W$, defined on an extension of $\mc P_x $, 
		such that 
		\begin{equation}
			\forall t\ge 0:\qquad
			X_{t} = W_{\gamma_X(t)}
			\qquad A_{X}(t) = \int_{J} \loct{W}{y}{t} m(\rd y), 
		\end{equation}
		where $\gamma_X $ is the right-inverse of $A_{X} $. 
		We consider the process $Z $ defined by 
		\begin{equation}
			\forall t\ge 0:\qquad
			Z_{t} = W_{\gamma_Z(t)}
			\qquad A_{Z}(t) = \int_{J} \loct{W}{y}{t} m_{Z}(\rd y),
		\end{equation}
		where $\gamma_Z $ is the right-inverse of $A_{Z} $.
		From~\cite[Section~V.47, Remark~(ii)]{RogWilV2}, $Z$ is a diffusion on $J $ on natural scale with speed measure $m_Z $.
		Thus, $X = \process{Z_{ A_Z \circ \gamma_X   (t)}} $.
		From \cite[Proposition~IV.1.12]{RevYor} and the proof of~\cite[Theorem~V.47.1]{RogWilV2}, for the case of a regular diffusion with unattainable boundaries, we have that $ \gamma_Z(\cdot) = \qv{Z}$ and that $\qv{Z} $ is almost surely strictly increasing. 
		Thus, the time--changes $\gamma_Z,A_Z,\gamma_X,A_X $ are all almost surely strictly increasing and hence invertible,  
		\ie{} $A_Z = \gamma_{Z}^{-1}$, $A_X  = \gamma_{X}^{-1} $.
		This with~\cite[Exercise~VI.1.27]{RevYor} yield that  
		\begin{equation}
			\gamma_{X}^{-1}(t) = \int_{J} \loct{W}{y}{t} m(\rd y)
			= \int_{J} \loct{W}{y}{t} m_Z(\rd y) + \loct{W}{\zeta}{t} m(\{\zeta\})
			= \gamma^{-1}_{Z}(t) +  m(\{\zeta\}) \loct{Z}{\zeta}{\gamma^{-1}_Z(t)}
		\end{equation}
		and that
		\begin{equation}
			\bigbraces{\gamma^{-1}_{Z} \circ \gamma_X}^{-1} (t)
			= \gamma^{-1}_X \circ \gamma_{Z} = t + m(\{\zeta\}) \loct{Z}{\zeta}{t} = A_{\rho}(t).
		\end{equation}
		This completes the proof. 
	\end{proof}
	
	\subsection{Proof of Lemma~\ref{lem_kernel_monotony}}
	\label{app_lem_kernel_monotony}
	
	For reader's convenience, we recall the statement of Lemma~\ref{lem_kernel_monotony}.
	
	\begin{lemmaOhne}
		Let $[(t,x,y)\ra p_0(t,x,y)] $ and $m_0(\rd y) = \dot m_0( y) \vd y $ be the probability transition kernel and speed measure of the  geometric Brownian motion.
		For all $x>0 $ and $T>0 $, there exists a compact $K_{x,T}$ of $(0,\infty) $ such that on $(t,y)\in [0,T] \times K^{c}_{x,T} $, the function $[t\ra p_0(t,x,y) \dot m_0(y)] $ is increasing.
	\end{lemmaOhne}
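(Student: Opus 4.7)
My plan is to reduce the claim to a direct calculation on the explicit log-normal density. Since $\dot m_0(y)$ is nonnegative and independent of $t$, the monotonicity of $t\mapsto p_0(t,x,y)\dot m_0(y)$ is equivalent to that of $t\mapsto p_0(t,x,y)$. So it suffices to determine those $y$ for which the bare transition density is increasing in $t$ on $[0,T]$.

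Under $\Qrob_x$, the process is a driftless geometric Brownian motion, so $S_t = x\exp(\sigma B_t - \sigma^2 t/2)$ and
\begin{equation}
p_0(t,x,y) = \frac{1}{y\sigma\sqrt{2\pi t}}\exp\Bigbraces{-\frac{(\log(y/x)+\sigma^2 t/2)^2}{2\sigma^2 t}}.
\end{equation}
Writing $u := \log(y/x)$ and expanding the square yields
\begin{equation}
\log p_0(t,x,y) = -\log y - \log\sigma - \tfrac{1}{2}\log(2\pi t) - \frac{u^2}{2\sigma^2 t} - \frac{u}{2} - \frac{\sigma^2 t}{8},
\end{equation}
from which I would compute
\begin{equation}
\partial_t \log p_0(t,x,y) = -\frac{1}{2t} + \frac{u^2}{2\sigma^2 t^2} - \frac{\sigma^2}{8}.
\end{equation}
This is nonnegative if and only if $u^2 \ge \sigma^2 t + \sigma^4 t^2/4$.

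Since the right-hand side is increasing in $t$, requiring the inequality uniformly for $t\in(0,T]$ reduces to $u^2 \ge \sigma^2 T + \sigma^4 T^2/4$. I would therefore set $C_{x,T} := \sqrt{\sigma^2 T + \sigma^4 T^2/4}$ and
\begin{equation}
K_{x,T} := \bigcubraces{y\in(0,\infty):\; |\log(y/x)| \le C_{x,T}} = [x e^{-C_{x,T}},\; x e^{C_{x,T}}],
\end{equation}
which is a compact subset of $(0,\infty)$. For every $y\in K_{x,T}^c$ and every $t\in(0,T]$, $\partial_t \log p_0(t,x,y)\ge 0$, so $t\mapsto p_0(t,x,y)$, and hence $t\mapsto p_0(t,x,y)\dot m_0(y)$, is increasing on $(0,T]$; the extension to $[0,T]$ is by right-continuity (or vacuously, since $p_0(0,x,\cdot)=\delta_x$ vanishes on $K_{x,T}^c$).

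Honestly, there is no hard step here: the whole argument is a one-line calculus check once the Gaussian form of the density is in hand. The only mild care needed is (i) picking the correct form of the driftless GBM density under $\Qrob_x$ (recalling from Proposition~\ref{prop_riskneutral} that the risk-neutral dynamic is driftless), and (ii) noting that the set $K_{x,T}^c$ is the complement of a compact subset of $(0,\infty)$ rather than of $\mathbb{R}$, which is automatic here since $K_{x,T}$ is bounded away from both $0$ and $\infty$.
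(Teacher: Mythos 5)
Your proof is correct and follows essentially the same route as the paper: differentiate the explicit log-normal density in $t$ and observe that the $u^2/t^2$ term makes the derivative positive once $y$ leaves a compact neighbourhood of $x$. You are in fact more explicit than the paper's own argument, which only notes the asymptotic sign of $\partial_t(\dot m_0 p_0)$ as $y\to 0$ or $y\to\infty$, whereas you exhibit the compact set $K_{x,T}$ and verify uniformity over $t\in(0,T]$.
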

	
	\begin{proof}
		From the expressions for the probability transition kernel and speed measure of the geometric Brownian motion
		(see~\cite[Appendix~1.20]{BorSal}):
		\begin{equation}
			\dot m(y) p_{0}(t,x,y)
			= 
			\frac{1}{y \sigma \sqrt{2\pi t}}
			\exp \biggbraces{ - \frac{\bigbraces{\log y - \log x - \bigbraces{\mu - \frac{1}{2} \sigma^{2}}t}^{2}  }{2 \sigma^{2} t}}.
		\end{equation}
		Taking the derivative in time yields 
		\begin{equation}
			\begin{aligned}
				\partial_t \bigbraces{\dot m(y) p_{0}(t,x,y)}
				= &  \biggbraces{ \frac{\bigbraces{\log y - \log x - \bigbraces{\mu - \frac{1}{2} \sigma^{2}}t}^{2}}{2 \sigma^{2} t^{2}}
				\\	&+ 2 \frac{\bigbraces{\log y - \log x - \bigbraces{\mu - \frac{1}{2} \sigma^{2}}t}}{2 \sigma^{2} t}\bigbraces{\mu - \frac{1}{2} \sigma^{2}}t - \frac{1}{2t}} \dot m(y) p_{0}(t,x,y).
			\end{aligned}
		\end{equation}
		We observe that for all $t,x >0$, as $y \lra 0$ and $y\lra \infty $, the quantity $\partial_t p_{0}(t,x,y) $
		becomes positive.
		This completes the proof.
	\end{proof}
	
	\begin{small}

		\bibliographystyle{abbrv}
	\end{small}
	
\end{document}